\documentclass[aps,reprint,groupaddress,superscriptaddress,amssymb,pre,onecolumn]{revtex4-2}
\usepackage[utf8]{inputenc}
\usepackage{color}
\usepackage{xcolor}
\usepackage{graphicx}
\usepackage{placeins}
\usepackage[margin=3cm]{geometry}

\usepackage{mdframed}
\usepackage{physics}
\usepackage{amsmath}
\usepackage{amsfonts}
\usepackage{amsthm}

\usepackage{bm}
\usepackage[symbol]{footmisc}
\usepackage[colorlinks,citecolor=blue]{hyperref}
\usepackage[capitalise]{cleveref}

\newtheorem{theorem}{Theorem}
\newtheorem{corollary}{Corollary}

\newcommand{\ie}{i.e.\ }

\newcommand{\der}[2]{\frac{\mathrm{d} #1}{\mathrm{d}#2}}
\newcommand{\lr}[1]{\langle #1 \rangle}
\newcommand{\LR}[1]{\left\langle #1 \right\rangle}
\newcommand{\ER}{Erd\H{o}s-R\'enyi}

\definecolor{deepmagenta}{RGB}{120,0,120}
\definecolor{mutedpurple}{RGB}{90,0,110}
\newcommand{\new}[1]{{#1}}

\begin{document}
\preprint{APS/123-QED}

\title{Habitat heterogeneity and dispersal network structure as drivers of metacommunity dynamics}

\author{Davide Bernardi}
\altaffiliation{Equal contribution}
\affiliation{Laboratory of Interdisciplinary Physics, Department of Physics and Astronomy ``G. Galilei", University of Padova, Padova, Italy}
\affiliation{National Biodiversity Future Center, Palermo, Italy}
\affiliation{INFN, Sezione di Padova, via Marzolo 8, 35131 Padova, Italy}
\author{Alice Doimo}
\altaffiliation{Equal contribution}
\affiliation{Laboratory of Interdisciplinary Physics, Department of Physics and Astronomy ``G. Galilei", University of Padova, Padova, Italy}
\author{Giorgio Nicoletti}
\altaffiliation{Equal contribution}
\affiliation{International Centre for Theoretical Physics, Trieste, Italy}
\author{Prajwal Padmanabha}
\altaffiliation{Equal contribution}
\affiliation{Department of Fundamental Microbiology, University of Lausanne, Switzerland}
\author{Andrea Rinaldo}
\affiliation{Department of Civil, Environmental and Architectural Engineering, University of Padova, Padova, Italy}
\author{Samir Suweis}
\affiliation{Laboratory of Interdisciplinary Physics, Department of Physics and Astronomy ``G. Galilei", University of Padova, Padova, Italy}
\affiliation{INFN, Sezione di Padova, via Marzolo 8, 35131 Padova, Italy}
\author{Sandro Azaele}
\affiliation{Laboratory of Interdisciplinary Physics, Department of Physics and Astronomy ``G. Galilei", University of Padova, Padova, Italy}
\affiliation{National Biodiversity Future Center, Palermo, Italy}
\affiliation{INFN, Sezione di Padova, via Marzolo 8, 35131 Padova, Italy}
\author{Amos Maritan}
\email{Correspondence: amos.maritan@unipd.it}
\affiliation{Laboratory of Interdisciplinary Physics, Department of Physics and Astronomy ``G. Galilei", University of Padova, Padova, Italy}
\affiliation{National Biodiversity Future Center, Palermo, Italy}
\affiliation{INFN, Sezione di Padova, via Marzolo 8, 35131 Padova, Italy}

\begin{abstract}
\noindent 
Spatial structure and species interactions jointly shape the dynamics and biodiversity of ecological systems, yet most theoretical models either neglect spatial heterogeneity or sacrifice analytical tractability. 
Here, we provide a unified microscopic, mechanistic framework for deriving effective metapopulation and metacommunity models from individual-based ecological dynamics on arbitrary dispersal networks. The resulting coarse-grained description features an effective dispersal kernel that encodes both microscopic dynamical parameters and network topology. Based on this framework, we demonstrate exact analytical results for species persistence in both homogeneous and heterogeneous landscapes, including a generalization of the classical concept of metapopulation capacity to non-uniform local extinction rates.
Incorporating stochasticity arising from finite carrying capacities, we obtain a reduced one-dimensional description that reveals universal finite-size scaling laws for extinction times and fluctuations. Extending the approach to multiple competing species, we prove that in homogeneous environments monodominance can be avoided only in a fine-tuned, marginally stable coexistence state, and that the classic metapopulation capacity gives only a necessary but not sufficient condition for persistence. We demonstrate that heterogeneous habitats can support stable coexistence, but only above a critical level of heterogeneity. Finally, we outline how additional ecological processes can be systematically incorporated within the same formalism.
Together, these results provide analytical benchmarks and a general route for constructing spatially explicit ecological theories based on an interpretable underlying mechanistic foundation.
\end{abstract}

\maketitle

\section{Introduction}
\noindent
The spatial dimension of ecological dynamics has long been recognized as important \cite{Levin1992,Tilman1994,levin2000multiple,HanOva2000, mouquet2002coexistence,chave2002,di2020dispersal, diniz2020landscape, chen2020trade, loke2023unveiling,marrec2021bitbol,abbara2023bitbol, denk2025spatial}, yet many theoretical studies continue to rely on well-mixed assumptions, where time is the only axis of change \cite{allesina2012stability,roy2020complex,altieri2021properties, barongalla2023PhysRevLett,poleygalla2023PhysRevE, hatton2024diversity}. Since the early work of Lotka and Volterra, such assumptions have been analytically convenient, but ecologically limiting, as real ecosystems are embedded in heterogeneous landscapes. Incorporating spatial structure into theory is challenging, as it significantly increases complexity. However, a quantitative understanding of the drivers and determinants of species dynamics in spatially extended systems is crucial both to uncover the mechanisms through which biodiversity arises and persists, and to anticipate the impacts of human interventions on ecosystems \cite{Diaz2006,tilman2017future,isbell2017linking}.

Multispecies dynamics on landscapes emerge from an interplay between local processes -- such as birth, death, competitive or trophic interactions -- and global processes -- such as colonization and dispersal. Developing models that can disentangle the contributions of these mechanisms, both individually and in combination, is crucial for predicting persistence and coexistence \cite{bascompte1996habitat,hanski1999,Reichenbach_2007frey,gilarranz2012spatial,baron2020dispersal,tejo2021coexistence,Wu_2022microbial_partitioning}. Without such a framework, comparisons to empirical data and the identification of vulnerable species remain elusive.

One route to incorporating space explicitly is through partial differential equation models of population densities in continuous habitats \cite{Holmes_1994_ecologyPDE,Cantrell1999PDE}. While powerful, these models are computationally demanding and analytically challenging, especially when space is not homogeneous. A more tractable intermediate framework is provided by metapopulation and metacommunity theory, in which the landscape is represented as a set of discrete habitat patches connected by dispersal \cite{hanski1999,leibold2004metacommunity}. Within patches, populations are assumed to be well-mixed, enabling models that capture both local interactions and between-patch connectivity. This abstraction has proved fruitful: in the single-species case, metapopulation models yield clear metrics for persistence, such as the metapopulation capacity introduced by Hanski and Ovaskainen, which links survival thresholds to landscape connectivity \cite{hanski1998,HanOva2000}. However, extending these approaches to multiple interacting species is more challenging. While several classical paradigms of metacommunity theory -- species sorting, patch dynamics, mass effects, and neutrality -- provide conceptual structure, empirical systems often exhibit mixtures of these processes, and generalizable measures for multispecies persistence remain limited.

Statistical physics provides a complementary perspective, aiming to describe complex systems through coarse-grained variables and emergent laws rather than microscopic detail. May’s seminal analysis of random community matrices \cite{MAY_1972} demonstrated that tools from random matrix theory could set a null expectation for stability in large ecosystems \cite{allesina2012stability,Allesina_2015_stabilityage40,Biroli2018,barongalla2023PhysRevLett}. Since then, methods drawn from spin-glass theory, disordered systems, and mean-field dynamics have been used to study ecological communities \cite{Azaele_2016,bunin2017_PRE,Galla2018,altieri2021properties,suweis2024prl,Akjouj_2024_review_GLV}. These approaches naturally emphasize universality: details of individual species and interactions can be abstracted away, allowing analysis in terms of scaling laws, phase transitions, and macroscopic stability criteria. These approaches identify broad patterns, such as phase transitions between stable and unstable community states, but they too have often relied on well-mixed assumptions that neglect spatial heterogeneity.

Metacommunity theory provides a natural meeting point between these perspectives. Single-species metapopulation models already display phase transitions between extinction and persistence, with control parameters determined by landscape connectivity \cite{HanOva2000,NicPad2023}. In systems with multiple interacting species, dispersal shapes collective dynamics and may promote nonequilibrium behaviors \cite{Reichenbach_2007frey,holland2008strong,Guichard2014,baron2020dispersal,denk20022hallatschek, nauta2024topological}. Heterogeneity in habitat quality, dispersal pathways, and competition among species resembles the quenched disorder studied in statistical physics. These parallels suggest that techniques from disordered systems, finite-size scaling, and non-equilibrium statistical mechanics are well-suited for advancing ecological theory. 

Here, we unify, generalize, and provide a more rigorous theoretical grounding for recent approaches for deriving effective metacommunity theories from the bottom up, that integrate species-specific traits and processes while explicitly accounting for landscape structure \cite{NicPad2023,PadNic2024,DoiNic2025}. This approach offers a balance between interpretability and tractability: the resulting models provide reduced descriptions that retain a clear link to the underlying ecological mechanisms. Such a connection is crucial when testing model predictions against empirical data -- such as species distributions across space -- and when tracing these patterns back to landscape features and species traits, ideally disentangling their respective contributions.

The article is organized as follows. In \cref{sec:micromodel}, we present the core ideas in a simplified scenario involving a single species and basic ecological processes: space-limited local demography and network-based dispersal and colonization \cite{NicPad2023}. We then show analytical results for single-species persistence in arbitrarily structured landscapes, extending the classical concept of metapopulation capacity to heterogeneous habitats where local extinction rates may vary (\cref{sec:onespecies-theorems}). Next, in \cref{sec:stochastic}, we study the effect of stochasticity \cite{DoiNic2025}. We then introduce interspecific interactions into the model in \cref{sec:multispec-model}, beginning with competition \cite{PadNic2024}. In \cref{sec:multispec-results}, we derive further analytical results on the coexistence of species in both homogeneous and heterogeneous habitats. Finally, we show in \cref{sec:outlook} how additional ecological processes can be systematically included, paving the way for metacommunity models that strike a balance between reduced complexity and a mechanistic link to the underlying ecological processes. Proofs of theorems in the main text are presented in the appendices.

\section{Metapopulation models from microscopic dynamics}\label{sec:micromodel}
\noindent We introduce a microscopic description of a metapopulation model on arbitrary networks \cite{NicPad2023}. We first consider the dynamics of a single species in a dispersal network of $N$ habitat patches, each with a finite number $M_i$ of colonizable sites. The network is described by a weighted adjacency matrix $\hat{A}$, whose elements $A_{ij}$ denote the strength of the dispersal from the $i$-th habitat to the $j$-th one, with $A_{ii} = 0$. In this setting, we describe all ecological processes through individual-based transitions \cite{gardiner,McKaneNewman2004_PRE}. We assume that each patch is inhabited by a settled population that does not move. If we denote a settled individual in patch $i$ by $S_i$, we assume that each individual can die with a patch-dependent death rate $e_i>0$. 
\begin{equation}
        S_i \xrightarrow{e_i} \varnothing_i,
    \label{eqn:reactions_settled_singlespecies}
\end{equation}
where $\varnothing_i$ denotes an empty site in the same patch.

Although settled individuals do not move, they can produce new individuals of the same species that explore the network and attempt to colonize other patches. We denote one such explorer in the $i$-th patch with $X_i$. In some ecological contexts, this distinction between settled populations and explorers has a direct biological interpretation -- for example, plants and their seeds, or sessile adults and dispersing larvae. In other contexts, it serves as a conceptual distinction reflecting differences in the life histories or time scales of ecological processes, which are fundamental for deriving general metapopulation models from microscopic dynamics, as we will see. Explorers' birth is governed by the reaction
\begin{equation}
        S_i \xrightarrow{c_i \, C_{ij}} S_i + X_j,
    \label{eqn:reaction_exbirth_singlespecies}
\end{equation}
where $c_i$ acts as a birth rate for the explorers from patch $i$, and $C_{ij}$ denotes the feasibility of exploration from patch $i$ to patch $j$. Similarly, explorers can move between two patches with a rate $D_{ij} = D \, W_{ij}$, where $D$ is a baseline exploration rate and $W_{ij} \ne 0$ if and only if explorers can move directly from the $i$-th habitat to the $j$-th  habitat patch. We will always make the assumption that $W_{ij} = A_{ij}$, i.e., that explorers can only move one step at a time along the dispersal network edges. However, our results can be applied to other scenarios, such as one where multiple steps are taken at once. Then, explorers die with a rate $\gamma$ or attempt to colonize the patch they are in with a rate $\lambda$. We write these reactions as
\begin{equation}
    \begin{gathered}
        X_i \xrightarrow{D \, W_{ij}} X_j \\
        X_i \xrightarrow{\gamma} \Phi \\
        \quad X_i + \varnothing_i \xrightarrow{\lambda / M_i} S_i , \quad X_i + S_i \xrightarrow{\lambda / M_i} S_i
\label{eqn:reactions_explorers_singlespecies}
    \end{gathered}
\end{equation}
where $\Phi$ denotes an empty unbounded space for explorers. The rate $\lambda/M_i$ appearing in \cref{eqn:reactions_explorers_singlespecies} takes into account both the rate at which explorers attempt colonization, $\lambda$, and the fact that, when they do, they can choose one of the $M_i$ sites of patch $i$. \new{A point of note here is our modeling choice. New individuals of a species can only be created by a combination of both diffusion of explorers and their subsequent colonization. This is conceptually different from diffusion of settled individuals: diffusion redistributes existing individuals on a graph and does not imply the creation of new individuals. Below, we show that separating these two processes leads to emergent coarse-grained models with interpretable dispersal kernels, which would otherwise be absent.}

We denote the number of individuals with $[\cdot]$ so that the state of the model is fully specified by $([\vec{S}], [\vec{X}]) = ([S_1], \dots, [S_N], [X_1], \dots, [X_N])$ since, at all times, we have that $[S_i] + [\varnothing_i] = M_i$ for all $i$. In particular, we are interested in the density of individuals in each patch $i$, $p_i = [S_i]/M_i \in [0,1]$, and we first consider the limit in which the number of colonizable sites $M_i$ is large, so we can ignore stochastic effects. We will relax this assumption later. We obtain the equations 
\begin{gather}
\label{eqn:rate_equations_singlespecies}
        \dot{p}_i = -e_i p_i + \lambda(1 - p_i) x_i \\
        \dot{x}_i = - (\lambda + \gamma) x_i + \sum_{j = 1}^N p_j c_j\frac{M_j}{M_i} C_{ji} + \frac{D}{M_i} \sum_{j = 1}^N\left(x_j M_jW_{ji} - x_i M_i W_{ij}\right) \nonumber
\end{gather}
where we also introduced for convenience the rescaled number of explorers $x_i = [X_i] / M_i$. \Cref{eqn:rate_equations_singlespecies} also corresponds to the evolution of the averages
\begin{gather*}
    \ev{S_i}_p(t) = \sum_{[S_i] = 0}^{M_i} [S_i] \, p([S_i], t) \\
    \ev{X_i}_p(t) = \sum_{[X_i] = 0}^{\infty} [X_i] \, p([X_i], t)
\end{gather*}
and can be seen as the first order of the Kramers-Moyal expansion of the master equation for the probability $p(\vec{[S]}, \vec{[X]}, t)$, where $\vec{[S]} = ([S]_1, \dots, [S]_N)$ with the set of reactions in \cref{eqn:reactions_settled_singlespecies,eqn:reaction_exbirth_singlespecies,eqn:reactions_explorers_singlespecies}. 

\subsection{Fast-exploration regime}\label{sec:fast}
\noindent Although \cref{eqn:rate_equations_singlespecies} fully describes the deterministic limit of our microscopic model, we now assume that only settled individuals are visible and seek to derive an effective metapopulation dynamics involving only the settled population density. Specifically, we take the limit in which explorers are much faster than the local dynamics on patches \cite{NicPad2023} (see Figure \ref{fig:single_species_results}a). \new{By grouping together the diagonal decay and the diffusion term in the explorers' equation, the quasi-stationary condition $\dot{x}_i = 0$ can be rewritten as
\begin{equation}
     \sum_{j = 1}^N \left[(\lambda + \gamma) \delta_{ij} - D \left(\frac{M_j}{M_i} W_{ji} - \delta_{ij} \sum_{k = 1}^N W_{ik}\right)\right]x_j  = \sum_{j = 1}^N p_j c_j\frac{M_j}{M_i} C_{ji} \nonumber
\end{equation}
where, after collecting a factor $\lambda$, the l.h.s. depends on the matrix
\begin{equation}
\label{eqn:Fmatrix_singlespecies}
    F_{ij} = \left(1 + \frac{\gamma}{\lambda}\right)\delta_{ij} + \frac{D}{\lambda} \left(\delta_{ij}\sum_{k=1}^N W_{ik} - \frac{M_j}{M_i}W_{ji}\right).
\end{equation}
}\cref{eqn:Fmatrix_singlespecies} only depends on two dimensionless parameters,
\begin{equation}
    \label{eqn:adim_params_singlespecies}
    f = \frac{D}{\lambda}, \qquad g = \frac{\gamma}{\lambda},
\end{equation}
with $f$ quantifying the exploration efficiency in terms of how many patches are visited before a colonization attempt and $g$ the explorers' lifespan. Clearly, if $f \ll 1$, \ie $D \ll \lambda$, explorers will remain close to the originating patch. On the other hand, if $D \gg \lambda$ they will explore far-away patches before settling on one. Similarly, $g \ll 1$ denotes the limit in which explorers will survive long enough to always attempt colonization before dying, whereas $g \gg 1$ makes exploration harder.

This enables us to write the relation
\begin{equation}
\label{eqn:lambdaxi_singlespecies}
    \lambda x_i = \sum_{j = 1}^N\sum_{k = 1}^N p_j c_j \frac{M_j}{M_i} C_{jk} F^{-1}_{ik}
\end{equation}
where $F^{-1}_{ij}$ is a shorthand notation for the elements of the inverse of $\hat{F}$. By inserting \cref{eqn:lambdaxi_singlespecies} into the rate equation for the settled population, \cref{eqn:rate_equations_singlespecies}, we end up with the effective evolution
\begin{equation}
    \label{eqn:settled_dynamics_singlespecies}
    \dot{p}_i = -e_i p_i + (1-p_i) \sum_{j = 1}^N  c_j K_{ij} p_j
\end{equation}
where we introduced the kernel matrix
\begin{equation}
\label{eqn:matrix_kernel_singlespecies}
    K_{ij} = \frac{M_j}{M_i} \sum_{k = 1}^N C_{jk} F^{-1}_{ik}
\end{equation}
which describes the effective colonization rate from patch $j$ to patch $i$ due to the effect of the explorers. As expected, large patches have a larger influence on smaller ones through the pre-factor $M_j / M_i$. Importantly, the effective dynamics, through the kernel $\hat K$ in \cref{eqn:matrix_kernel_singlespecies}, summarizes the dependence of the detailed reactions and the network topology, showing how the microscopic features of the model are instrumental in determining the properties of the metapopulation.

\subsection{Kernel-based encoding of structure and dynamics}
\noindent In order to understand the properties of the kernel, we need explicitly to invert the matrix $\hat{F}$ in \cref{eqn:Fmatrix_singlespecies}. To do so, we assume that the matrix  
\begin{equation}
\label{eqn:general_laplacian_singlespecies}
    L_{ij} = \delta_{ij}\sum_{k=1}^N W_{ik} - \new{\frac{M_j}{M_i}}W_{ji}
\end{equation}
is diagonalizable, \ie $\hat{L} = \hat{V} \hat{\Omega} \hat{V}^{-1}$, where $\hat \Omega$ is a diagonal matrix. We note that if $W_{ij} = A_{ij}$ (the adjacency matrix introduced above) \new{and patches are identical ($M_i = M_j$)}, then \cref{eqn:general_laplacian_singlespecies} corresponds exactly to the standard combinatorial Laplacian, with $\sum_k A_{ik} := q_i$ as the out-degree of patch $i$. Other relevant choices are possible, such as $W_{ik}= A_{ik}/q_i$, which corresponds to the random walk Laplacian.

In general, we will focus on the scenario where explorers move according to the dispersal network described by $\hat{A}$. Then, we can immediately invert $\hat{F}$, which is diagonal in the eigenspace of $\hat{L}$, to find the explicit form of the kernel:
\begin{equation}
\label{eqn:matrix_kernel_explicit_singlespecies}
    K_{ij} = \frac{M_j}{M_i}\sum_{k = 1}^N C_{jk} \sum_{l = 1}^N \frac{V_{kl}(V^{-1})_{li}}{1 + g + f \omega_l}
\end{equation}
where $\omega_k$ is the $k$-th eigenvalue of $\hat{L}$.
In suitable cases, \cref{eqn:matrix_kernel_explicit_singlespecies} can be interpreted as a weighted sum over all possible paths between two patches \cite{NicPad2023}. Furthermore, it manifestly incorporates both dynamical features - through $f$ and $g$ - and topological ones - through $\hat{C}$ and the eigenspace of $\hat{L}$.

Several choices for the feasibility of exploration $C_{ij}$ are possible as well, which should depend, in general,  on the efficiency of exploration $f$. In particular, if $f \to \infty$, we assume that there exists a maximal explorability $\xi$, which corresponds to the maximal dispersal capacity of the species. Conversely, if $f \to 0$, exploration should not be possible, and $C_{ij}$ should vanish. Hence, we expect that
\begin{equation}
    C_{ij}(f)\xrightarrow{f \ll 1} 0, \qquad C_{ij}(f) \xrightarrow{f \gg 1} \xi.
\end{equation}
The simplest choice would be to set $C_{ij} \propto \delta_{ij}$, so that explorers are created in the same patch. In order to avoid self-colonization - which is particularly relevant if $f \ll 1$ - another option is to take $C_{ij} \propto A_{ij}$, so that explorers are created in neighboring patches. This choice is equivalent to assuming that explorers need to take at least one exploration step before attempting colonization. Then, a simple parametric form is
\begin{equation}
    C_{ij} = \frac{\xi}{1 + 1/f} A_{ij} \; ,
\end{equation}
but our approach naturally allows for other choices as well, leading to a vast array of possible phenomenologies. 

In  \Cref{fig:single_species_results}b, we show how the kernel changes with the structure of the dispersal network. We consider three exemplary topologies. In a ring network, the kernel decays exponentially with the network distance $d_{ij}$ between patch $i$ and patch $j$. The characteristic decay length increases with $f$, suggesting a direct relation between exploration efficiency and the average dispersal distance of the species. However, in more complex topologies, the kernel elements are not solely a function of the network distance. \new{Even though there is a shortest distance between two nodes (denoted by $d_{ij}$), there are many paths that can be taken by the explorers. Despite this, we observed in generic topologies that, on average, the kernel values decay exponentially with the shortest distance between two nodes. The characteristic decay rate of this exponential correlates with the value of $f$, as noted above for the ring topology. The variance of kernel values at a given distance} becomes especially relevant at large $f$. In particular, in \Cref{fig:single_species_results}b, we also consider a small-world dispersal network, introducing long-range connections between patches that may be otherwise very far apart, and a Barabasi-Albert topology, where a few patches behave as hubs and connect to a large number of other nodes. 

\begin{figure}
    \centering
    \includegraphics[width=1\textwidth]{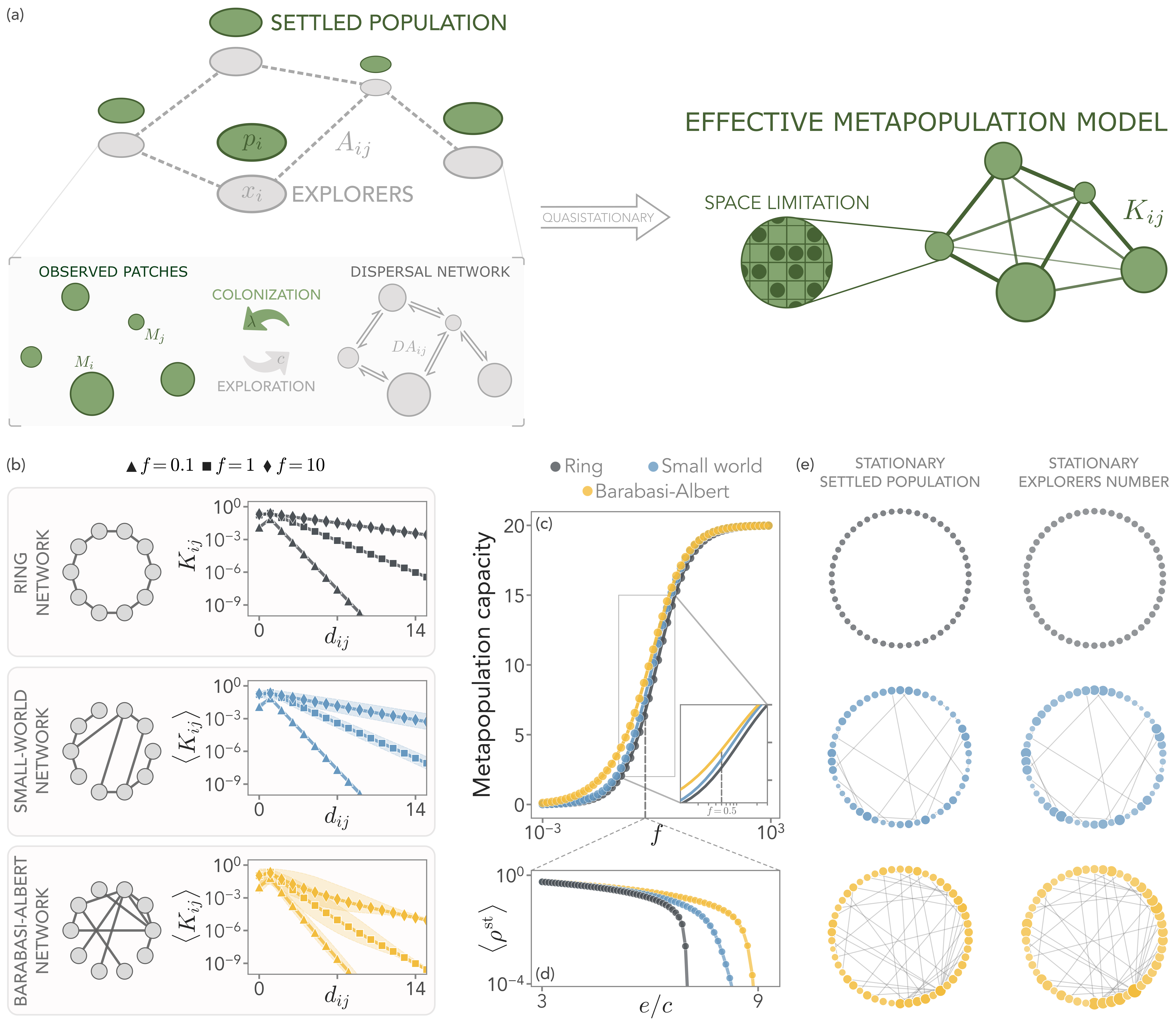}
    \caption{\new{(a) Sketch of the model. We consider the dynamics of a settled population (green, $p_i$), which resides in habitat patches, and of explorers (gray, $x_i$) that move along a dispersal network that connect the habitats, described by an adjacency matrix $A_{ij}$. Each habitat patch has a different number of colonizable sites ($M_i$). Explorers are born with a patch-dependent rate $c_i$, move between neighboring patches with a rate $D A_{ij}$, and attempt to colonize empty sites with a rate $\lambda$. In the fast-exploration regime, this leads to an effective metapopulation model with an explicit colonization kernel $K_{ij}$.} (b) The colonization kernel depends manifestly on the topology of the dispersal network. In general, it is not a function of the network distance $d_{ij}$ alone, but rather depends on all paths connecting two habitat patches. (c-e) The metapopulation capacity, which determines a species' survival ability, depends on both the exploration efficiency $f$ and the network topology. In particular, more heterogeneous dispersal networks are more advantageous at intermediate and low $f$, affecting the stationary population and explorers (panels (d) and (e)). Figure adapted from \cite{NicPad2023}.}
    \label{fig:single_species_results}
\end{figure}

\subsection{Metapopulation persistence}
\label{sec:metapop}
\noindent In full generality, \cref{eqn:settled_dynamics_singlespecies} admits a trivial solution where the focal species is extinct, \ie $p_i = 0$ for all patches. We can characterize the linear stability of this solution by computing the leading eigenvalue of the Jacobian matrix
\begin{equation*}
    J_{ij}(\vec{p} = 0) = \frac{\partial \dot{p}_i}{\partial p_j}\biggr|_{\vec{p} = 0} = -e_i \delta_{ij} + c_j K_{ij}.
\end{equation*}
In particular, we can first focus on the case considered by previous works in which all patches are equivalent, so that $M_i = M$, $e_i = e$, and $c_i = c$. Then, we find that the extinct state is unstable if and only if
\begin{equation}
    \label{eqn:survival_condition_singlespecies}
    \lambda_M > \frac{e}{c}
\end{equation}
where $\lambda_M$ is the maximum eigenvalue of the kernel $\hat{K}$, and it is referred to as the \emph{metapopulation capacity} of the landscape \cite{HanOva2000}. \new{Intuitively, this measure means that the ratio between extinction and colonization rates defines a threshold for the largest eigenvalue of the dispersal kernel: if the eigenvalue exceeds this threshold, the species is able to persist on the network. Conversely, if the network yields an eigenvalue below the threshold, the species will eventually go extinct in all patches. Therefore, a higher metapopulation capacity enables the persistence of species with a less favorable (i.e., larger) extinction-to-colonization ratio, highlighting the role of the environment in shaping species persistence.}

In \Cref{fig:single_species_results}c, we show that the structure of the dispersal network directly affects the metapopulation capacity. The survival of a species tends to increase with the heterogeneity of the network, especially at intermediate and low $f$, where the topological features may help a species colonize and survive.

\section{Exact results on the metapopulation capacity of a focal species}
\label{sec:onespecies-theorems}
\noindent We now prove several exact results on the focal species persistence based on \cref{eqn:settled_dynamics_singlespecies}. The proofs of all theorems and corollaries are in \cref{app:proofs-one}. In the following, we can set $c_i = 1$ for all $i$, which is equivalent to a rescaling of the colonization kernel, and does not restrict the generality of the results. 

\subsection{Generalized metapopulation capacity in heterogeneous habitats with non-uniform local extinction rates}
\noindent We first generalize the classic result on the metapopulation capacity of a landscape from \cref{sec:metapop}, which holds only for the case of a patch-independent extinction rate. To this end, we leverage the fact that the entries of the colonization kernel are all positive if the graph has a single connected component. However, we can adopt the even more general assumption that $\hat{K}$ is irreducible.
Given the assumption stated above,
\begin{theorem}
Given the following metapopulation dynamics
\begin{equation}
	\dot p_i(t) = -e_ip_i(t) + \Big(1-p_i(t)\Big)\sum_{j=1}^N K_{ij} c_j p_j(t),
	\label{eq:model-one-spec-lean}
\end{equation}
where $e_i>0$ and the kernel matrix $\hat{K}$ is irreducible. Let $\hat{\mathcal{E}}$ be the matrix with elements $\mathcal{E}_{ij} = e_i \delta_{ij}$, and $\hat{\mathcal{C}}$ be the matrix with elements $\mathcal{C}_{ij} = c_i \delta_{ij}$. Then, the generalized landscape matrix
\begin{equation}
    \hat{\mathcal{M}} = \hat{K}\hat{\mathcal{C}}\hat{\mathcal{E}}^{-1}
\end{equation}
is also irreducible, and has a Perron-Frobenius positive eigenvalue $\lambda_\mathrm{max} > 0$. Then, for all non-trivial initial conditions, $\vec p(0)\neq 0$, and $p_i(0)\in [0,1]$, $\lim_{t \to \infty}\vec p(t)= 0$, i.e., $\vec p=0$ is globally stable if $\lambda_\mathrm{max} < 1$ and unstable if $\lambda_\mathrm{max} > 1$.
\label{th:theorem1}
\end{theorem}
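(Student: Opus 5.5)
First I would dispose of the algebraic claims. Since $\hat{\mathcal C}$ and $\hat{\mathcal E}^{-1}$ are diagonal with strictly positive entries, $\mathcal M_{ij} = K_{ij} c_j / e_j$ has exactly the same zero pattern as $K_{ij}$. Hence $\hat{\mathcal M}$ is nonnegative and irreducible. The Perron--Frobenius theorem then gives a simple eigenvalue $\lambda_\mathrm{max}>0$ equal to the spectral radius, with strictly positive left and right eigenvectors. I will denote the left eigenvector by $\vec u^T\hat{\mathcal M} = \lambda_\mathrm{max}\vec u^T$. I would also check that the cube $[0,1]^N$ is forward invariant: on the face $p_i=0$ we have $\dot p_i\ge 0$, and on the face $p_i=1$ we have $\dot p_i = -e_i<0$. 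So all trajectories stay in the cube.

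For the case $\lambda_\mathrm{max}<1$ (global stability), I would use a linear Lyapunov function weighted by the extinction rates. Define $\vec w$ by $w_j = u_j/e_j>0$, i.e.\ $\vec w^T = \vec u^T\hat{\mathcal E}^{-1}$, and set $V(\vec p)=\sum_i e_i w_i p_i = \vec u^T \vec p$. The goal is a multiplier that turns $-\hat{\mathcal E}+\hat K\hat{\mathcal C}$ into something with a sign. Write $-\hat{\mathcal E}+\hat K\hat{\mathcal C} = (\hat{\mathcal M}-I)\hat{\mathcal E}$. The vector $\vec z = \hat{\mathcal E}\vec p\ge 0$ is the natural variable, and $\vec u^T(\hat{\mathcal M}-I)\vec z = (\lambda_\mathrm{max}-1)\vec u^T\vec z$. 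So I take $V=\vec u^T\hat{\mathcal E}\vec p=\sum_i u_i e_i p_i$. Using $(1-p_i)\le 1$ and the nonnegativity of $\hat K\hat{\mathcal C}\vec p$,
\begin{equation*}
\dot V \le \vec u^T\hat{\mathcal E}\,(-\hat{\mathcal E}+\hat K\hat{\mathcal C})\vec p .
\end{equation*}
This has the wrong extra factor of $\hat{\mathcal E}$. The fix is to use $V=\vec u^T\vec p$ with $\vec u$ the left Perron vector of the similar matrix $\hat{\mathcal E}^{-1}\hat K\hat{\mathcal C}$. That matrix equals $\hat{\mathcal E}^{-1}\hat{\mathcal M}\hat{\mathcal E}$, so it has the same spectrum and is irreducible. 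With this choice $\vec u^T\hat K\hat{\mathcal C} = \lambda_\mathrm{max}\vec u^T\hat{\mathcal E}$, and therefore
\begin{equation*}
\dot V \le \sum_i u_i\Big(-e_ip_i + (\hat K\hat{\mathcal C}\vec p)_i\Big) = (\lambda_\mathrm{max}-1)\sum_i u_ie_ip_i \le -(1-\lambda_\mathrm{max})\min_i e_i\, V .
\end{equation*}
This gives exponential decay of $V$. Since $u_i>0$, every $p_i\to 0$.

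For $\lambda_\mathrm{max}>1$ (instability), I would linearize at $\vec p=0$. The Jacobian is $J=-\hat{\mathcal E}+\hat K\hat{\mathcal C} = \hat{\mathcal E}(\hat{\mathcal E}^{-1}\hat K\hat{\mathcal C}-I)$. The main obstacle is relating the spectrum of $J$ to $\lambda_\mathrm{max}$ when the $e_i$ are not equal, because $J$ is not similar to $\hat{\mathcal M}-I$. I would handle it with Metzler-matrix theory. $J$ is Metzler and irreducible, so its dominant eigenvalue $s(J)$ is real with a positive eigenvector. For $\hat{\mathcal E}>0$ diagonal and $B=\hat K\hat{\mathcal C}\ge0$, the classical equivalence is $s(B-\hat{\mathcal E})>0 \iff \rho(\hat{\mathcal E}^{-1}B)>1$, and the same with $<$. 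This is the next-generation-matrix argument: take the positive Perron vector $\vec v$ of $\hat{\mathcal E}^{-1}B$. Then $J\vec v = \hat{\mathcal E}(\lambda_\mathrm{max}-1)\vec v$, which is componentwise positive, and the Collatz--Wielandt characterization of $s(J)$ for irreducible Metzler matrices gives $s(J)>0$. The linearization then shows $\vec p=0$ is unstable. Alternatively, the same $V$ gives a direct argument: $\dot V \ge (\lambda_\mathrm{max}-1)\sum_i u_ie_ip_i - O(|\vec p|^2)>0$ in a small neighbourhood, so trajectories cannot converge to $0$.
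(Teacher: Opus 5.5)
Your argument for $\lambda_\mathrm{max}<1$ is the paper's proof. The paper takes $f=\vec w^T\vec p$ with $\vec w$ the positive left Perron vector of $\hat{\mathcal M}$, uses $\vec w^T\hat K\hat{\mathcal C}=\lambda_\mathrm{max}\vec w^T\hat{\mathcal E}$, drops $(1-p_i)\le 1$, and bounds $\vec w^T\hat{\mathcal E}\vec p\ge e_\mathrm{min} f$.

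One step of yours is wrong as written. Your ``fix'' replaces $\vec u$ by the left Perron vector of $\hat{\mathcal E}^{-1}\hat K\hat{\mathcal C}$. For that vector, $\vec u^T\hat{\mathcal E}^{-1}\hat K\hat{\mathcal C}=\lambda_\mathrm{max}\vec u^T$, so it is $\hat{\mathcal E}^{-1}\vec u$, not $\vec u$, that satisfies $(\cdot)^T\hat K\hat{\mathcal C}=\lambda_\mathrm{max}(\cdot)^T\hat{\mathcal E}$. A counterexample: take $\hat K\hat{\mathcal C}$ to be the $2\times 2$ swap matrix and $\hat{\mathcal E}=\mathrm{diag}(1,2)$. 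Then $\vec u=(1,\sqrt2)$, and $\vec u^T\hat K\hat{\mathcal C}=(\sqrt2,1)\neq(1/\sqrt2,2)=\lambda_\mathrm{max}\vec u^T\hat{\mathcal E}$.

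The identity you need is just the left Perron relation for $\hat{\mathcal M}=\hat K\hat{\mathcal C}\hat{\mathcal E}^{-1}$, multiplied on the right by $\hat{\mathcal E}$. So the $\vec u$ you defined at the outset, with $V=\vec u^T\vec p$, already works, and your chain of inequalities is correct for it. The detour came only from briefly switching to $V=\vec u^T\hat{\mathcal E}\vec p$.

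For $\lambda_\mathrm{max}>1$, the paper only computes $\dot f\ge e_\mathrm{min}(\lambda_\mathrm{max}-1)f$ along the linearized flow. Your Metzler/Collatz--Wielandt route is correct and plugs directly into the linearization theorem. There, the right Perron vector $\vec v$ of $\hat{\mathcal E}^{-1}\hat K\hat{\mathcal C}$ gives $J\vec v=(\lambda_\mathrm{max}-1)\hat{\mathcal E}\vec v>0$, hence the spectral abscissa satisfies $s(J)>0$.

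Your nonlinear variant gives more than the paper's local instability. It uses $\dot V\ge V\,[(\lambda_\mathrm{max}-1)e_\mathrm{min}-CV]$ on the cube, where $V=\vec u^T\vec p$ with the original $\vec u$. This keeps $V(t)\ge\min\{V(0),(\lambda_\mathrm{max}-1)e_\mathrm{min}/C\}$, so no trajectory with $\vec p(0)\neq 0$ can approach extinction.

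Two of your additions fill points the paper leaves implicit. The forward invariance of $[0,1]^N$ is what licenses $p_i\ge 0$ in both bounds. The zero-pattern argument establishes the irreducibility of $\hat{\mathcal M}$.
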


\noindent
This theorem leads to the following statement:
\begin{corollary}
When $\lambda_\mathrm{max}>1$, the stationary state of \cref{eq:model-one-spec-lean} is non-trivial. Specifically, if a non-trivial stable stationary state exists, then the population is strictly positive in all patches, \ie $p_i>0,\, \forall i$.
\label{th:firstcorollary}
\end{corollary}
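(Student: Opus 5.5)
We need to prove Corollary 1. There are two claims: when $\lambda_\mathrm{max}>1$ the stationary state is non-trivial, and any non-trivial stable stationary state has $p_i>0$ for all $i$.

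\textbf{First claim: a non-trivial stationary state exists.} Theorem~\ref{th:theorem1} shows that $\vec p=0$ is unstable when $\lambda_\mathrm{max}>1$, so trajectories starting near it are repelled. To show that something else attracts them, I would use invariance and boundedness of the cube $[0,1]^N$:
\begin{itemize}
\item at $p_i=0$ we have $\dot p_i=\sum_j K_{ij}c_jp_j\ge 0$;
\item at $p_i=1$ we have $\dot p_i=-e_i<0$.
\end{itemize}
The vector field is also cooperative, since $\partial \dot p_i/\partial p_j=(1-p_i)K_{ij}c_j\ge0$ for $j\ne i$. Let $\vec v>0$ be the Perron eigenvector of $\hat{\mathcal M}$ and set $\vec p^{(0)}=\epsilon\hat{\mathcal E}^{-1}\vec v$ with small $\epsilon>0$. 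The linearization $(\hat K\hat{\mathcal C}-\hat{\mathcal E})\hat{\mathcal E}^{-1}\vec v=(\lambda_\mathrm{max}-1)\vec v>0$ shows $\dot{\vec p}>0$ at $\vec p^{(0)}$ to first order in $\epsilon$. So $\vec p^{(0)}$ is a strict sub-solution, and by monotone-systems theory (Kamke) the trajectory from it is componentwise non-decreasing. Being bounded by $1$, it converges to an equilibrium $\vec p^*\ge\vec p^{(0)}>0$, which is non-trivial.

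\textbf{Second claim: strict positivity.} I would argue directly from the stationary equations. Suppose $\vec p^*$ is stationary and non-trivial, and that some $p^*_i=0$. The $i$-th equation gives $0=\sum_jK_{ij}c_jp^*_j$. Since every term is non-negative and $c_j>0$, this forces $p^*_j=0$ for all $j$ with $K_{ij}>0$. Let $Z$ be the set of patches with $p^*_i=0$. Then no $i\in Z$ has $K_{ij}>0$ for any $j\notin Z$. Because $Z$ is non-empty and, by non-triviality, not all of $\{1,\dots,N\}$, this contradicts irreducibility of $\hat K$. Hence $p^*_i>0$ for all $i$. Note that stability is not actually needed; positivity holds for every non-trivial equilibrium. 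The same argument also gives $p^*_i<1$, since at $p_i=1$ the equation reads $\dot p_i=-e_i\neq0$.

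\textbf{Main obstacle.} The delicate step is the first claim: rigorously showing the positive equilibrium exists. I would handle it with the sub-solution monotone argument above; Brouwer's fixed point theorem on a truncated box is an alternative. Because the cooperative structure makes this route available, it seems sufficient.
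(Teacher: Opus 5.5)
Your proof is correct. For the positivity claim you use the paper's own argument. A zero at patch $k$ forces $p^*_j=0$ whenever $K_{kj}c_j>0$, and irreducibility spreads this to every patch. The paper follows an explicit chain $K_{kk_{n-1}}\cdots K_{k_1i}>0$ to an arbitrary $i$. You instead note that the zero set $Z$ would be a nonempty proper set with no $\hat K$-edges leaving it, which is an equivalent formulation of irreducibility; as you observe, neither version uses stability.

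Where you genuinely go beyond the paper is the first sentence of the corollary. The paper gives no separate argument for it and reads it off \cref{th:theorem1}. That theorem's proof shows instability of $\vec p=0$ only through the linearization, and instability alone does not produce a non-trivial equilibrium. Your monotone-systems route supplies this existence step rigorously: forward invariance of $[0,1]^N$, cooperativity $(1-p_i)K_{ij}c_j\ge0$, a strict sub-equilibrium along the Perron direction, and monotone convergence.

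The phrase ``to first order in $\epsilon$'' can be made exact. With $\vec u=\hat{\mathcal E}^{-1}\vec v$ one has $\hat K\hat{\mathcal C}\vec u=\lambda_\mathrm{max}\vec v$ and $e_iu_i=v_i$. So at $\vec p=\epsilon\vec u$ the $i$-th component of the vector field is exactly $\epsilon v_i\,(\lambda_\mathrm{max}-1-\epsilon\lambda_\mathrm{max}u_i)$, which is positive for $\epsilon<(\lambda_\mathrm{max}-1)/(\lambda_\mathrm{max}\max_iu_i)$. By comparison, every trajectory starting at or above $\epsilon\vec u$ then stays bounded away from extinction. This also covers the dynamical reading of ``the stationary state is non-trivial.''
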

\noindent
We now observe that the classical result on the metapopulation capacity of Hanski and Ovaskainen, assuming an exponentially decaying dispersal kernel \cite{HanOva2000}, can be derived as a corollary of \cref{th:theorem1}:
\begin{corollary}
Let
\begin{equation}
e_i = \frac{e}{A_i}, \quad c_i = c\,A_i, \quad \mathrm{and} \quad K_{ij} = e^{-\alpha d_{ij}},
\end{equation}
where $A_i$ denotes the area of patch $i$ and $d_{ij}$ the distance between patches $i$ and $j$. Then, the long-term behavior of \cref{eq:model-one-spec-lean} is non-trivial if $\delta = e / c < \lambda_M$, where $\lambda_M$ is the largest eigenvalue of the symmetric landscape matrix $\mathcal{M}^\mathrm{H}_{ij} \equiv A_i e^{-\alpha d_{ij}} A_j$.
\label{th:HO-assumptions}
\end{corollary}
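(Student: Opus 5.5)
The plan is to reduce \cref{th:HO-assumptions} to \cref{th:theorem1} by computing the generalized landscape matrix for the stated choices of $e_i$, $c_i$ and $K_{ij}$, and then showing that its spectral condition $\lambda_\mathrm{max}>1$ is equivalent to $\delta<\lambda_M$.

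First, we check the hypotheses of \cref{th:theorem1}. We have $e_i = e/A_i>0$, since areas and $e$ are positive. The kernel $K_{ij}=e^{-\alpha d_{ij}}$ is entrywise strictly positive, because distances are finite. It is therefore irreducible, with the diagonal entries $K_{ii}=1$ harmless.

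Next, we compute $\hat{\mathcal M}=\hat K\hat{\mathcal C}\hat{\mathcal E}^{-1}$ entrywise:
\begin{equation*}
\mathcal M_{ij} = K_{ij}\, c_j\, e_j^{-1} = e^{-\alpha d_{ij}}\, c A_j \,\frac{A_j}{e} = \frac{c}{e}\, e^{-\alpha d_{ij}} A_j^2 .
\end{equation*}
This matrix is not symmetric, so the key step is to relate it to $\mathcal M^\mathrm{H}$ by a similarity transformation. Let $\hat{\mathcal A}=\mathrm{diag}(A_i)$. Then
\begin{equation*}
\hat{\mathcal A}\,\hat{\mathcal M}\,\hat{\mathcal A}^{-1} = \frac{1}{\delta}\,\hat{\mathcal M}^\mathrm{H},
\end{equation*}
since $A_i\, e^{-\alpha d_{ij}} A_j^2 A_j^{-1} = A_i e^{-\alpha d_{ij}} A_j$. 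Similar matrices share their spectra, so the Perron--Frobenius eigenvalue satisfies $\lambda_\mathrm{max}=\lambda_M/\delta$. Here $\lambda_M$ is the largest eigenvalue of the symmetric, positive matrix $\hat{\mathcal M}^\mathrm{H}$, which is real and coincides with its Perron root.

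Finally, \cref{th:theorem1} states that $\vec p=0$ is unstable exactly when $\lambda_\mathrm{max}>1$, that is, when $\lambda_M>\delta$. Together with \cref{th:firstcorollary}, this makes the long-term state non-trivial, with $p_i>0$ in every patch.

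The only delicate point is the asymmetry of $\hat{\mathcal M}$. Choosing the diagonal similarity by $\hat{\mathcal A}$ resolves it, and the remaining steps are bookkeeping.
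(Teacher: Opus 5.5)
Your proposal is correct and follows the same route as the paper: you reduce to Theorem~1 and identify the Perron root of $\hat{\mathcal M}$ as $\lambda_M/\delta$. Your diagonal similarity by $\mathrm{diag}(A_i)$ is more careful than the paper's proof, which asserts the entrywise identity $\mathcal M_{ij}=\frac{c}{e}\mathcal M^\mathrm{H}_{ij}$; that identity holds for $\hat{\mathcal E}^{-1}\hat K\hat{\mathcal C}$ rather than for $\hat K\hat{\mathcal C}\hat{\mathcal E}^{-1}$ as defined, so your similarity (equivalently, conjugation by $\hat{\mathcal E}$) is exactly the step that justifies equal spectra.
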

\noindent
While the rigorous proof of both corollaries is provided in \cref{app:proofs-one}, we immediately have that \cref{eqn:survival_condition_singlespecies} can be rewritten as $\lambda_M \, c \, e^{-1} > 1$, in agreement with \cref{th:theorem1}. A noteworthy implication of \cref{th:theorem1} is that the local extinction and colonization rates of every landscape patch contribute to the leading eigenvalue and thus affect the species persistence threshold. Consequently, this threshold is a global property of the system rather than being determined by a few particularly favorable patches. Thus, we define the generalized metapopulation capacity, $\lambda_\mathrm{max}$, as the maximum eigenvalue of the matrix $\hat{K}\hat{\mathcal{C}}\hat{\mathcal{E}}^{-1}$.

\subsection{Stability and uniqueness of steady-state solutions in homogeneous habitats}
\noindent {We now consider the special case of systems that are spatially homogeneous, a scenario in which further exact results can be proven. Note that in such a setting, $e_i = e \; \forall i$. Then, the stability of both trivial (global extinction) and non-trivial steady states can be studied explicitly. Furthermore, we prove that the non-trivial steady state is unique. Strict spatial homogeneity implies dynamics taking place in a finite $d$-dimensional lattice $\Lambda = \{1,2,\dots,N-1\}^d$ with spacing set equal to $1$, periodic boundary conditions, and $K_{ij}$ can depend only on $j-i \mod N,\, \forall i,j\in \Lambda$. However, we are able to prove the following theorem under the less restrictive hypothesis that $\hat{K}$ is irreducible and that row sums are independent of the row index.
\begin{theorem}
	Consider the metapopulation dynamics:
	\begin{equation}
		\dot p_i= -e p_i + \big(1-p_i\big)\sum_{j} K_{ij}p_j.
		\label{eq:model-one-spec-const-row-sum}
	\end{equation}
Assume that $\hat{K}$ is irreducible and that its row sums are constant $\tilde{  K}_0 = \sum_j K_{ij}$.  Then, 
\begin{itemize}
\item When $e>\tilde{  K}_0$, the global extinction state is globally stable; a non-trivial steady state is not feasible.
\item When $e<\tilde{ K}_0$, the global extinction state is unstable; the non-trivial, spatially-uniform stationary solution is unique and locally stable.
\end{itemize}
\label{th:theorem-TI-singlespecies}
\end{theorem}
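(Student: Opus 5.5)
The plan is to reduce most of the statement to \cref{th:theorem1} and then handle uniqueness and local stability of the uniform state by a direct argument. With $c_i=1$ and $e_i=e$, the generalized landscape matrix is $\hat{\mathcal{M}}=\hat K/e$. The matrix $\hat K$ is nonnegative, irreducible, and has constant row sums $\tilde K_0$. Its Perron--Frobenius eigenvalue is therefore exactly $\tilde K_0$, with the uniform vector $\vec 1$ as its positive eigenvector, since Perron--Frobenius guarantees that the only positive eigenvector is the Perron one. Hence $\lambda_\mathrm{max}=\tilde K_0/e$, and \cref{th:theorem1} gives directly that $\vec p=0$ is globally stable when $e>\tilde K_0$ and unstable when $e<\tilde K_0$. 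Infeasibility of a nontrivial steady state for $e>\tilde K_0$ follows from global stability: a nontrivial equilibrium would be a non-trivial initial condition that does not converge to $0$.

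For $e<\tilde K_0$, I would first exhibit the uniform solution. Substituting $p_i=p^*$ gives $e p^*=(1-p^*)\tilde K_0 p^*$, so $p^*=1-e/\tilde K_0\in(0,1)$.

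Uniqueness is the main step. Let $\vec p$ be any nontrivial steady state; by \cref{th:firstcorollary} (or directly from irreducibility) all $p_i>0$. Write the steady-state condition as
\begin{equation}
\frac{e\,p_i}{1-p_i}=\sum_j K_{ij}p_j .
\end{equation}
Take $i_{\max}=\arg\max_i p_i$. Then $\sum_j K_{i_{\max} j}p_j\le \tilde K_0 p_{\max}$, which gives $e/(1-p_{\max})\le \tilde K_0$, i.e.\ $p_{\max}\le p^*$. Symmetrically, at $i_{\min}$ we get $\sum_j K_{i_{\min} j}p_j\ge\tilde K_0 p_{\min}$, hence $p_{\min}\ge p^*$. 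Therefore $p_i=p^*$ for all $i$. This max/min comparison uses only constant row sums; irreducibility is needed only to ensure positivity, so that we may divide by $p_i$.

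For local stability, the Jacobian at $p^*$ is
\begin{equation}
J_{ij}=-(e+\tilde K_0 p^*)\delta_{ij}+(1-p^*)K_{ij}=-\tilde K_0\,\delta_{ij}+\frac{e}{\tilde K_0}K_{ij},
\end{equation}
using $e+\tilde K_0p^*=\tilde K_0$ and $1-p^*=e/\tilde K_0$. Every eigenvalue $\mu$ of $\hat K$ satisfies $|\mu|\le\tilde K_0$ (Perron--Frobenius, or Gershgorin with row sums). The eigenvalues of $J$ therefore have real part at most $-\tilde K_0+e<0$, which gives local stability.

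The obstacle I expect is the infeasibility part rather than the algebra, since it relies on the global statement of \cref{th:theorem1}. As a fallback, the same max argument gives it directly: $e p_{\max}\le(1-p_{\max})\tilde K_0p_{\max}$ forces $p_{\max}\le 1-e/\tilde K_0<0$, a contradiction unless $p_{\max}=0$.
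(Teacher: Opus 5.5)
Your proposal is correct and follows essentially the same route as the paper: identify $\tilde K_0$ as the Perron--Frobenius eigenvalue with the uniform eigenvector and invoke \cref{th:theorem1}, prove uniqueness by evaluating the steady-state equation at the maximal and minimal components, and get local stability from the Jacobian $-\tilde K_0\,\delta_{ij}+(e/\tilde K_0)K_{ij}$, whose eigenvalues have real part at most $e-\tilde K_0<0$. Your small additions fill in points the paper leaves implicit, namely strict positivity of any nontrivial steady state via irreducibility and the direct max-argument for infeasibility when $e>\tilde K_0$; your Jacobian also has the correct signs, which the paper's displayed version does not.
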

\noindent
In \cref{app:proofs-one} we provide both a proof for the general case of a constant-row matrix, as in the theorem's statement, and an alternate proof valid only in the translation-invariant case, which uses explicit Fourier modes. Previous works showed a stronger result, the global stability of the non-trivial state \cite{Lajmanovich_1976,Fall_2007}. The proofs of \cref{th:theorem-TI-singlespecies} reported in \cref{app:proofs-one} are based on simpler, self-contained arguments.}

\section{Stochastic Metapopulation Dynamics and Finite-Size Effects}\label{sec:stochastic}
\noindent
Having established the basic framework and illustrated it with both numerical and exact results, we now examine the effects of a finite patch size $M_i$, which acts as a finite carrying capacity and introduces fluctuations in the settled population. Rather than simply adding demographic noise to \cref{eqn:settled_dynamics_singlespecies}, a consistent incorporation of stochastic effects within our framework requires returning to the underlying microscopic model and moving beyond the first order Kramers-Moyal expansion we performed before.

\subsection{From deterministic dynamics to stochastic fluctuations}
\noindent
We begin with the microscopic model introduced in \cref{sec:micromodel}, but we make some simplifications for analytical tractability. We assume there is no inherent explorer death ($\gamma = 0$),  that all patches have the same carrying capacity ($M_i = M$), that explorer creation reflects the network ($C_{ij} \propto A_{ij}$), and without loss of generality, $c=1$. We set $p_i = [S_i]/M \in [0,1]$ and $x_i = [X_i]/M$. To analyze the stochastic dynamics, we consider the corresponding master equation and perform a Kramers-Moyal expansion, using the inverse carrying capacity $1/M$ as the expansion parameter. Truncating the expansion at first order recovers the deterministic model discussed above, while extending it to second order yields the following Fokker–Planck equation \cite{DoiNic2025}:
\begin{equation}
\partial_t \mathcal{P}(\vec{z}, t) = 
- \sum_i \partial_i \left[ \mathcal{A}_i(\vec{z}) \mathcal{P}(\vec{z}, t) \right]
+ \frac{1}{2 M} \sum_{i,j} \partial_i \partial_j \left[\mathcal{D}_{ij}(\vec{z})  \mathcal{P}(\vec{z}, t) \right],
\label{eq:FP}
\end{equation}
where $\vec{z} = (\vec{p}, \vec{x})$ is the $2N$-dimensional state vector, $\vec{A}(\vec{z})$ is the drift vector 
\\
\begin{align}
	& \mathcal{A}_i\equiv \mathcal{A}^{p}_i = 
	\lambda \, x_i (1-p_i) -  e_i p_i, \quad i\leq n
	\nonumber  \\
	& \mathcal{A}_i \equiv \mathcal{A}^{x}_i = 
	D \sum_j 
	\left(
	W_{ji} x_j -W_{ij} x_i  
	+ C_{ji} p_j \right) - \lambda \,x_i, \quad  i> n,
	\label{eq:FP-drift}
\end{align}
where we assume that $W_{ij}=A_{ij}$ and $ \hat{\mathcal{D}}$ is a $2N \times 2N$ diffusion matrix with the block structure
\begin{align}
	&\hat{\mathcal{D}}(\vec{z}) = 
	\begin{pmatrix}
		\hat{\mathcal{D}}^{pp} 
		& \hat{\mathcal{D}}^{p x} \\
		\hat{\mathcal{D}}^{ x p} 
		&  \hat{\mathcal{D}}^{xx}
	\end{pmatrix},
	\label{eq:FP-diffusion}
\end{align}
where the $N\times N$ blocks are
\begin{align}
	\mathcal{D}^{pp}_{ij} = & \left[ e_i p_i + \lambda  x_i (1 - p_i) \right]\delta _{ij} 
	\nonumber \\
	\mathcal{D}^{p x}_{ij} = & \mathcal{D}^{x p}_{ij} = - \lambda x_i (1-p _i)  \delta _{ij} 
	\nonumber \\
	\mathcal{D}^{xx}_{ij} = & \left[\sum_k (p _k C_{ki}+x_i W_{ik}+x_k W_{ki})+\lambda  x_i\right] \delta_{ij}  	
	 -(x_i W_{ij}+x_j W_{ji}) (1- \delta _{ij}) \; .
\end{align}
As expected, the diffusion term of \eqref{eq:FP} vanishes in the limit of large local capacity $M \to \infty$.
Although the Fokker–Planck equation represents a significant dimensional simplification compared to the full master equation, its form still poses several challenges, as the drift and diffusion terms, \cref{eq:FP-drift,eq:FP-diffusion}, are complex and coupled both within and across patches. To simplify, we consider a fully connected dispersal network, \ie, an adjacency matrix $A_{ij} \propto 1/N (1-\delta_{ij})$, representing homogeneous habitat patches. Then, we perform a time-scale separation and write a one-dimensional stochastic differential equation for the effective density of settled individuals $p$:
\begin{equation}\label{eq:stochSDE}
\dot{p} = \mathcal{\tilde{A}}(p) + \sqrt{\frac{1}{M} \, \tilde{\mathcal{D}}(p)} \, \xi(t)\,  
\end{equation}
where $\xi(t)$ denotes a Gaussian white noise, while $\mathcal{A}(p)$ and $\tilde{\mathcal{D}}(p)$ are the effective drift and diffusion coefficients, respectively (see \cref{app:stochmetapop} for details on their derivation and explicit forms).
\Cref{eq:stochSDE} is defined on the domain $0 < p \leq 1$, with $p = 0$ as an absorbing boundary corresponding to global extinction and $p = 1$ as a reflecting boundary representing the maximum local capacity.
In the deterministic limit, \cref{eq:stochSDE} recovers the metapopulation capacity $\lambda_M$, which separates extinction from persistence, in agreement with \cref{sec:metapop} and \cref{th:theorem1}. For finite $M$, extinction is inevitable as $t \to \infty$, and $p_{\mathrm{st}} = 0$ is the only true stationary state. Nonetheless, the condition $e < \lambda_M$ still defines a survival regime corresponding to a long-lived metastable state, where extinction occurs only through large fluctuations.
Simulations of the full microscopic model show excellent agreement with the one-dimensional reduction \cref{eq:stochSDE} in the survival regime, and only minor deviations in the mean-driven extinction regime that do not affect the qualitative behavior (\Cref{fig:stochmetapop}a-b).

\subsection{Finite-size scaling of the extinction-time distribution}
\noindent
Since the noise amplitude scales as $1/M$, systems with smaller carrying capacities experience stronger intrinsic fluctuations and a higher probability of noise-induced extinction.
We study the survival probability $S(t|M)$ at a finite system size $M$, focusing on the deterministic transition point ($e/c = \lambda_M$), where we expect a possible critical behavior. Indeed, numerical simulations of \cref{eq:stochSDE} reveal that the survival probability collapses as:
\begin{equation}
S(t|M) \sim f\left( \, t \, M^{\phi} \right),
\label{eq:StScaling}
\end{equation}
shown in \Cref{fig:stochmetapop}c-d. This behavior indicates that the critical regime exhibits classical finite-size scaling with a universal exponent $\phi = -1/2$.
The scaling behavior of the survival probability and of the first two moments of the extinction time distribution can be investigated analytically by studying the scaling behavior of the backward Fokker-Planck operator (see \cref{app:stochmetapop}). This analysis yields
\begin{align}
    \label{eq:scaling}
   S  = t^{-\alpha} \, \hat{S}\left( t M^{\phi},p_0 M^{\gamma}, \Delta M^{\eta} \right)\,,
      \quad  
    \langle T \rangle =  M^{\beta} \, \hat{T}_1 (p_0 M^{\gamma}, \Delta M^{\eta})\,,
        \quad
    \langle T^2\rangle= M^{\delta} \,\hat{T}_2 (p_0 M^{\gamma}, \Delta M^{\eta})
\end{align}
where $\Delta = (\lambda_M - e/c)/\lambda_M$ quantifies the distance from the deterministic transition threshold, and $p_0$ is the initial condition. The exponents are connected through the scaling relations $\beta=-\phi(1-\alpha),  \,  \delta=-\phi(2-\alpha)\;$. Their numerical values are reported in \cref{tab:scaling} (\cref{app:stochmetapop}). Importantly, these exponents characterize a specific universality class of noise-driven extinction dynamics. Modifications to the underlying individual reactions may change the functional form of the SDE and alter the critical exponents. Therefore, analyzing such scaling behaviors can provide a means to identify different underlying ecological mechanisms, using the present framework as a template.

\begin{figure*}[t!]
    \centering
    \includegraphics[width=\textwidth]{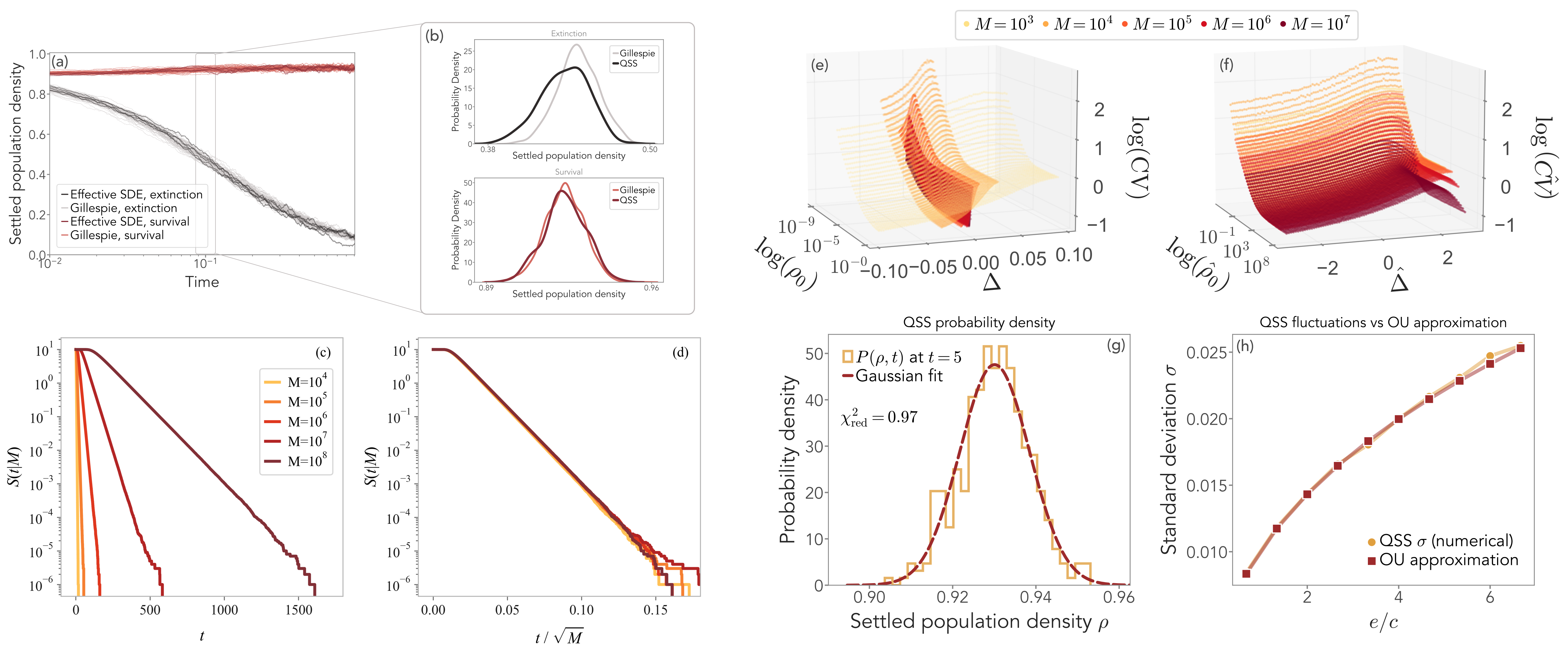}
    \caption{
    (a–b) Comparison of stochastic trajectories from the effective quasi-stationary SDE \cref{eq:stochSDE} and the full Gillespie simulation in the survival and extinction regimes. 
    (c–d) Finite-size scaling of the survival probability at criticality: (c) raw curves of survival probability as functions of time for different carrying capacities $M$, (d) collapse under the scaling ansatz \cref{eq:StScaling}.
    The initial condition is fixed at $p_0=1/2$. 
    Averages are obtained over $10^7$ numerical realizations of the dynamics in \cref{eq:stochSDE}.
    (e–f) Coefficient of variation of extinction times as a function of initial condition $p_0$ and deviation from criticality $\Delta$ for different values of the local carrying capacity $M$: (e) unscaled, (f) rescaled using finite-size scaling of the first two moments \cref{eq:scaling}. Averages are obtained over $10^7$ numerical realizations of the dynamics in \cref{eq:stochSDE}.
   (g–h) Fluctuations in the metastable regime: (g) distribution of settled densities obtained from 350 independent realizations with Gaussian fit; (h) standard deviation $\sigma$ vs. $e/c$, compared with the analytical estimate based on the Ornstein--Uhlenbeck (OU) approximation. The numerical values of $\sigma$ are obtained from Gaussian fits to distributions built from 250 independent realizations. Adapted from ref. \cite{DoiNic2025}.
    }
    \label{fig:stochmetapop}
\end{figure*}

\subsection{Linear approximation in the metastable regime}
\noindent
Building on the effective one-dimensional dynamics derived in the previous section, we show that key features of stochastic extinction can be described using simplified observables that retain ecological interpretability. In particular, we analyze the coefficient of variation (CV) of extinction times, defined as the ratio between the standard deviation and the mean, \( \mathrm{CV}(T) = \sqrt{\langle T^2 \rangle - \langle T \rangle^2} / \langle T \rangle \).
This dimensionless measure of relative variability exhibits a finite-size scaling collapse, as expected, and develops a pronounced maximum at a finite rescaled distance from criticality \( \hat{\Delta} = \Delta  \sqrt{M} > 0 \). This peak, located within the metastable survival regime, signals a fluctuation-driven transition associated with maximal uncertainty in extinction time (\Cref{fig:stochmetapop}e-f).
In this regime, extinctions are caused by a large fluctuation from the metastable state, which can be captured by a linear approximation of the effective dynamics \cref{eq:stochSDE} around the deterministic fixed point. This effective Ornstein-Uhlenbeck process (\cref{app:stochmetapop}) approximation closely agrees with numerical simulations (\Cref{fig:stochmetapop}g-h), demonstrating how a coarse representation of the stochastic system can be used effectively to link measurable population-level statistics to underlying ecological mechanisms.
\\\\ \indent
\new{
\subsection{Empirical outlook}
\noindent
The scaling predictions derived here also suggest possible routes for empirical tests. The most direct observables are extinction-time distributions, survival probabilities, and fluctuations of patch occupancy around a metastable state. These quantities become especially informative when estimates of effective local carrying capacity and landscape connectivity are available, since the deterministic metapopulation capacity $\lambda_M$ alone does not constrain extinction timescales. From the empirical side, the key requirement would be to compare systems or habitat patches that differ in an effective local capacity, such as patch area, resource availability, or typical census size, while retaining comparable dispersal conditions. Examples of suitable candidates include the Baltic rock-pool \textit{Daphnia} dataset (17 years, $>$500 pools of varying size) \cite{Pajunen2003}, which provides long-term extinction records as a function of pool volume as a proxy for $M$, or island and lake metapopulations \cite{HanOva2000}, where patch-level occupancy and richness data can be used to probe occupancy fluctuations near criticality.  Particularly suitable systems are controlled microcosm metapopulations with protozoa or bacteria in connected patches \cite{burkey1997metapopulation,Gonzalez1998,carrara2012}, in which repeated local extinction events can be monitored while independently varying effective local carrying capacity and dispersal connectivity, for instance $M$ could be set by patch volume or resource supply, and the distance from the persistence threshold $\Delta$ tuned by varying local extinction or colonization rates. Furthermore, combining the reduction to a single SDE with time-dependent boundaries could provide a natural framework for slowly varying environmental conditions \cite{Tuckwell1984,Molini_2011,BerLin2022,Bernardi_2024}.
In these settings, our theory predicts finite-size scaling of extinction statistics near the persistence threshold, as well as maximal variability at a finite rescaled distance from criticality $\hat{\Delta} > 0$, making the present framework useful not only as a theoretical prediction but also as a guide for interpreting extinction statistics in realistic ecological systems.}

\section{Interspecies interactions in competitive metacommunities}\label{sec:multispec-model}
\noindent
We now shift our focus from single-species metapopulation models in different settings to multi-species metacommunity models by incorporating interspecies interactions. As a starting point, we focus on the paradigmatic case of purely competitive species, since competition for space and resources is a pervasive and fundamental driver of ecological exclusion \cite{Tilman1994,DurLev1998,calcagno2006coexistence,Mesz_na_2006_competition,VioPu2010competition}.

We consider the same ecological reactions as in \cref{sec:micromodel}, and add a species index, using the convention that species are indicated by Greek indices and patches by Latin indices. Hence, a settled individual belonging to species $\alpha$ located within patch $i$ will be indicated with $S_{\alpha i}$. The reaction involving settlers introduced in \cref{sec:micromodel} naturally generalizes to 
\begin{equation}
	S_{\alpha i} \xrightarrow{e_{\alpha i}} \varnothing_i, 
	\label{eqn:reactions_settled_multispecies-0}
\end{equation}
where $e_{\alpha i}$ denotes the local death rate of species $\alpha$ in patch $i$. Reactions describing the generation of explorers are straightforward generalizations of the single-species case:
\begin{equation}
	S_{\alpha i} \xrightarrow{c_{\alpha i}  \, C_{\alpha, ij}} S_{\alpha i}  + X_{\alpha j}.
	\label{eqn:reaction_exbirth_multispecies}
\end{equation}
The explorers' reactions are also analogous to the single-species case: 
\begin{equation}
	\begin{gathered}
		X_{\alpha i} \xrightarrow{D \, W_{\alpha, ij}} X_{\alpha j} \\
		X_{\alpha i} \xrightarrow{\gamma_\alpha} \Phi \\
		\quad X_{\alpha i} + \varnothing_i \xrightarrow{\lambda_\alpha / M_i} S_{\alpha i} , \quad X_{\alpha  i} + S_{\alpha  i} \xrightarrow{\lambda_\alpha / M_i} S_{\alpha i}.
		\label{eq:reactions_explorers_multispecies}
	\end{gathered}
\end{equation}
An important difference with respect to the single-species case is that explorers of different species compete for the same empty spaces, which will lead to an interaction term in the effective metacommunity equations. 

For simplicity, we assume that $\gamma_\alpha =0$, $C_{\alpha,ij}=h_\alpha A_{\alpha,ij}$, and $W_{\alpha,ij}=A_{\alpha,ij}$. Under these assumptions, the set of microscopic reactions above leads to the following rate equations for the settler and explorer densities, which we indicate with $p_{\alpha i} = [{P_{\alpha i}}]/M_i$ and $x_{\alpha i} = [{X_{\alpha i}}]/M_i$, respectively:
\begin{align}
	\dot{p}_{\alpha i} &= -e_{\alpha i} p_{\alpha i} + \lambda_\alpha \left(1 - \sum_{\beta = 1}^S p_{\beta i}\right) x_{\alpha i} 	\label{eq:p-rate-equations-multi} \\		
	\dot{x}_{\alpha i} &= - \left(\lambda_\alpha x_{\alpha i} - h_{\alpha} \sum_{j = 1}^N A_{\alpha, ji} c_{\alpha j} p_{\alpha j}{\frac{M_j}{M_i}}\right) + D_\alpha \sum_{j = 1}^N\left(A_{\alpha, ji}x_{\alpha j}{\frac{M_j}{M_i}} - A_{\alpha, ij} x_{\alpha i}\right). \label{eq:x-rate-equations-multi}
\end{align}
As in \cref{sec:micromodel}, we assume that exploration occurs on time scales much shorter than those of the settled population; therefore, we set $\dot{x}_{\alpha i} = 0$. This quasi-stationary assumption leads to
\begin{equation}
	\lambda_\alpha x_{\alpha i} = h_\alpha \sum_{j,k = 1}^N \sum_{\beta = 1}^S (\hat{F}^{-1})^{\alpha \beta}_{ik} A_{\beta, jk}p_{\beta j}c_{\beta j}{\frac{M_j}{M_i}}
	\label{eq:QS-approx-multisp}
\end{equation}
where $\hat{F}$ is 
\begin{equation}
	F^{\alpha \beta}_{ij} = \delta_{\alpha\beta}\left[\delta_{ij} + f_\beta L_{\beta, ji}\right]    
\end{equation}
with $L_{\alpha, ij} = \delta_{ij}\sum_{k}A_{\alpha, ik} - A_{\alpha, ij}{M_i/M_j}$ the Laplacian matrix of the dispersal network of species $\alpha$, and $f_\alpha = D_\alpha/\lambda_\alpha$. As in the single-species case, the value of $f_\alpha$ quantifies the colonization efficiency of species $\alpha$. When $f_\alpha$ is greater than one, an explorer will, on average, diffuse farther through the network and reach more distant nodes from the exploration origin before attempting colonization. Conversely, when $f_\alpha$ is less than one, explorers will preferentially colonize patches located close to the origin.

\begin{figure}
	\centering
	\includegraphics[width=0.6\textwidth]{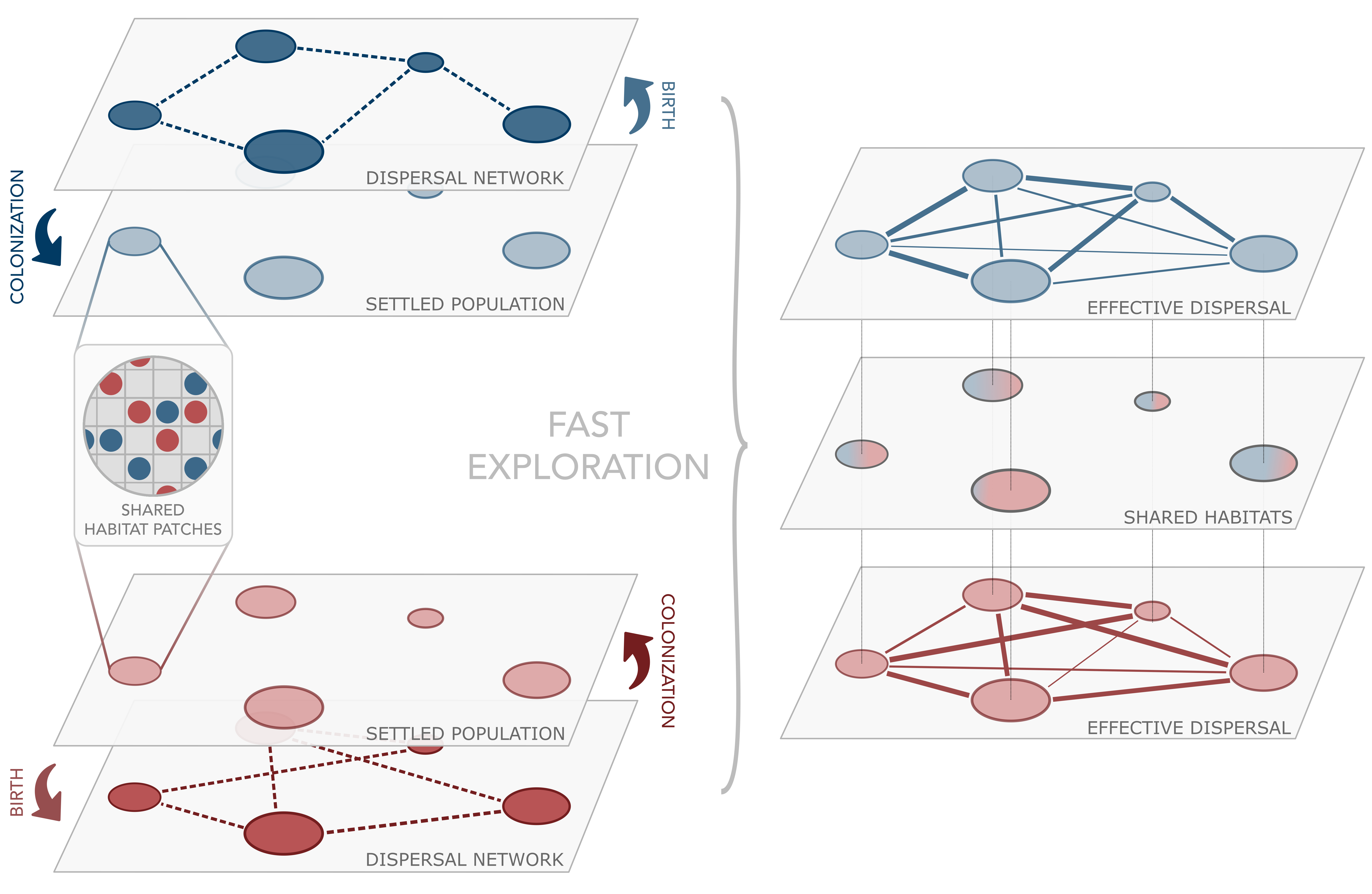}
	\caption{Visualization of the microscopic model with multiple species competing for space or a common resource leads to an effective metacommunity model. Assuming fast exploration, time-scale separation leads to an effective kernel description (\cref{eqn:kernel-multispec}) combined with the effective metacommunity equations reported in \cref{eqn:model-general}.}
	\label{fig:kernel-scheme-competition}
\end{figure}

Since $\hat{F}$ is block-diagonal with respect to the species index, we can compute its inverse species by species, obtaining
\begin{equation}
	(\hat{F}^{-1})^{\alpha \beta}_{ij} = \delta_{\alpha \beta}\sum_{k = 1}^N\frac{(\hat{V}_\alpha)_{ik}(\hat{V}^{-1}_\alpha)_{kj}}{1 + f_\alpha \omega_{\alpha k}}
		\label{eq:F-inverse-multisp}
\end{equation}
where $\omega_{\alpha k}$ is the $k$-th eigenvalue of $\hat{L}_{\alpha}$, and $\hat{V}_{\alpha}$ is the matrix of its right eigenvectors. Substituting  \cref{eq:F-inverse-multisp} into \cref{eq:QS-approx-multisp}, and the latter into the settled density equation \cref{eq:p-rate-equations-multi} leads to:
\begin{equation}
	\der{p_{\alpha i}}{t} = -e_{\alpha i} p_{\alpha i} + \left(1-\sum_{\beta=1}^{S}p_{\beta i}\right)\sum_{j=1}^{N} \,K_{\alpha, ij} \,p_{\alpha j},
	\label{eqn:model-general}
\end{equation}
where we introduced the dispersal kernel of species $\alpha$
\begin{equation}
	K_{\alpha, ji} = h_\alpha c_{\alpha i} {\frac{M_j}{M_i}}\sum_{k = 1}^N\sum_{l = 1}^N\frac{(\hat{V}_\alpha)_{ik}(\hat{V}^{-1}_\alpha)_{kl}}{1 + f_\alpha \omega_{\alpha k}}A_{\alpha, jl}\;.
	\label{eqn:kernel-multispec}
\end{equation}
We can set $h_\alpha = \xi_\alpha /(1 + f_\alpha^{-1})$, with $\xi_\alpha$ representing the maximum dispersal capacity of the species. With this formulation, exploration becomes impossible when $f_\alpha \to 0$, while the kernel stays finite as $f_\alpha \to \infty$. In general, different species need not share the same dispersal network. Note that this rigorous derivation clarifies the well-known extension of classical metapopulation models to metacommunities \cite{mouquet2002coexistence}. \new{We note that \cref{eqn:model-general} has the same structure as the single-species case discussed in \cref{sec:micromodel,sec:onespecies-theorems}, with the important difference that the ``excluded-volume" term $(1-\sum_{\beta=1}^{S}p_{\beta i})$ depends on all species, thus introducing interspecies interactions. Hence, species trying to colonize a patch experience competition from other species trying to colonize the same patch.}

\section{Coexistence and monodominance in homogeneous and heterogeneous habitats}\label{sec:multispec-results}
\noindent
As in the single-species case, we demonstrate how our framework can be applied to derive both analytical and numerical results on the coexistence of multiple species \new{that experience competitive interactions due to a shared common limiting resource, such as space}. We first consider a homogeneous habitat and landscape, showing that coexistence is possible only under fine-tuned conditions (a zero-measure set in parameter space). We then examine heterogeneous habitats within homogeneous landscapes (\ie, heterogeneity arising from local habitat properties) and derive analytical conditions for robust coexistence, identifying a critical level of habitat heterogeneity and its dependence on the metacommunity parameters. Finally, we show numerically that combining habitat and landscape (dispersal) heterogeneity further promotes coexistence. The proofs of all theorems and corollaries are in \cref{app:proofs-two}.

\subsection{Fragile coexistence in homogeneous habitats}
\noindent
We begin by examining the case of a homogeneous environment, in which local death, colonization, and birth rates are independent of the patch, and the colonization kernel is translation-invariant. As in \cref{sec:onespecies-theorems}, we set $c_{\alpha} = 1$. In this setting, species coexistence proves to be fragile, as established by the following theorem:
\begin{theorem}
	Consider the following metacommunity equations:
	\begin{equation}
		\der{p_{\alpha i}}{t} = -e_{\alpha} p_{\alpha i} + \left(1-\sum_{\beta=1}^{S}p_{\beta i}\right)\sum_{j=1}^{N}K_{\alpha, ij}\, p_{\alpha j},
		\label{eq:main-eq-multispec-TI-case}
	\end{equation}
and assume that each colonization kernel $\hat{K}_{\alpha}$ is translation-invariant, and $e_{\alpha}>0$ for all $\alpha$. Then, the only spatially uniform stationary state in which different species can coexist is a center manifold, and it occurs for a zero-measure set in the parameter space. The center manifold is (neutrally) stable.
\label{th:theorem-TI-multispec}
\end{theorem}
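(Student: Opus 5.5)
Since each $\hat{K}_\alpha$ is translation-invariant, its row sums $\tilde K_\alpha = \sum_j K_{\alpha,ij}$ do not depend on $i$. Substituting the spatially uniform ansatz $p_{\alpha i} = p_\alpha$ into \cref{eq:main-eq-multispec-TI-case} gives the reduced $S$-dimensional system
\begin{equation*}
\dot p_\alpha = p_\alpha\left[\tilde K_\alpha\Big(1-\sum_\beta p_\beta\Big) - e_\alpha\right].
\end{equation*}
A uniform stationary state with a set $\mathcal{S}$ of at least two coexisting species ($p_\alpha>0$ for $\alpha\in\mathcal{S}$) requires
\begin{equation*}
1-\sum_\beta p_\beta = \frac{e_\alpha}{\tilde K_\alpha}\quad \text{for every }\alpha\in\mathcal{S}.
\end{equation*}
This forces $e_\alpha/\tilde K_\alpha$ to be equal across $\mathcal{S}$. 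Those are $|\mathcal{S}|-1$ independent equality constraints on the parameters, so coexistence occurs only on a zero-measure set. On that set the stationary states are not isolated. They form the simplex portion $\{\sum_{\alpha\in\mathcal{S}} p_\alpha = 1-e/\tilde K,\ p_\alpha> 0\}$, an $(|\mathcal{S}|-1)$-dimensional manifold of equilibria. Feasibility requires $e/\tilde K<1$, which is the single-species condition of \cref{th:theorem-TI-singlespecies}.

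Next I would compute the Jacobian of the full $NS$-dimensional system at such a state. Write the perturbation as $p_{\alpha i}=p_\alpha+\delta_{\alpha i}$, expand, and use the fact that translation invariance makes every $\hat K_\alpha$ diagonal in Fourier modes $\mathbf{k}$, with eigenvalues $\hat K_\alpha(\mathbf{k})$ satisfying $\hat K_\alpha(0)=\tilde K_\alpha$. The linearization then decouples into $S\times S$ blocks, one per $\mathbf{k}$:
\begin{equation*}
J_{\alpha\beta}(\mathbf{k}) = \delta_{\alpha\beta}\big[-e_\alpha + \rho\, \hat K_\alpha(\mathbf{k})\big] - p_\alpha \tilde K_\alpha,\qquad \rho = 1-\sum_\gamma p_\gamma = e_\alpha/\tilde K_\alpha .
\end{equation*}
The diagonal part equals $-e_\alpha\big(1-\hat K_\alpha(\mathbf{k})/\tilde K_\alpha\big)$. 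Extinct species would also need to be invaded, which gives their diagonal entries.

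The spectrum then follows block by block. At $\mathbf{k}=0$ the matrix is the rank-one matrix $-\,\mathbf{u}\mathbf{1}^T$ with $u_\alpha = p_\alpha \tilde K_\alpha$. It has one eigenvalue $-\sum_\alpha p_\alpha\tilde K_\alpha<0$ and $|\mathcal{S}|-1$ zero eigenvalues, with eigenvectors spanning $\{\sum_\alpha \delta_\alpha=0\}$, which is exactly the tangent space of the equilibrium manifold. That identifies a center manifold made of equilibria, so it is neutrally stable along its tangent directions.

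For $\mathbf{k}\neq0$ I would use $\mathrm{Re}\,\hat K_\alpha(\mathbf{k}) \le \sum_j |K_{\alpha,ij}| = \tilde K_\alpha$, which holds because the kernels are nonnegative. Strict inequality follows from irreducibility, or aperiodicity, inherited from the connected network. So each $D(\mathbf{k})$ has negative real part. The block is then a diagonal matrix with negative-real-part entries minus the rank-one matrix $\mathbf{u}\mathbf{1}^T$ with $\mathbf{u}\ge 0$. I would show all its eigenvalues have negative real part through the secular equation $1+\sum_\alpha u_\alpha/(D_\alpha-\mu)=0$. If the $D_\alpha$ are real this is standard interlacing. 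In the general complex case I expect this to be the main obstacle, and I would try a Lyapunov or Gershgorin-type argument in a weighted norm, or else restrict to symmetric kernels, where $\hat K(\mathbf{k})$ is real.

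Finally, for any species outside $\mathcal{S}$, with $e_\gamma/\tilde K_\gamma > \rho$, the diagonal entry $-e_\gamma+\rho\hat K_\gamma(\mathbf{k})$ is negative, so those species cannot invade. Combining the blocks: transverse directions decay, and the only marginal directions are tangent to the manifold of equilibria. This gives neutral stability of the center manifold, which is nongeneric, and off the fine-tuned set a single species dominates.
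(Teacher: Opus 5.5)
Your route is the same as the paper's. You derive the fine-tuning condition $e_\alpha/\tilde K_\alpha=\rho$ from the uniform ansatz, block-diagonalize the Jacobian in Fourier modes into $S\times S$ blocks $J(\mathbf{k})=D(\mathbf{k})-\mathbf{u}\mathbf{1}^T$, and find that the $\mathbf{k}=0$ block has $S-1$ zero eigenvalues along $\{\sum_\alpha\delta_\alpha=0\}$ plus one negative eigenvalue. Your reading of the coexistence set as a normally attracting manifold of equilibria is, if anything, a cleaner route to neutral stability than the paper's quadratic expansion.

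The gap is the step you yourself call the main obstacle and leave open. You need to show that every eigenvalue of $J(\mathbf{k})$, $\mathbf{k}\neq0$, has negative real part when $D_\alpha(\mathbf{k})=-e_\alpha+\rho\hat K_\alpha(\mathbf{k})$ is complex. That is the generic case under the theorem's hypotheses, because translation invariance does not imply symmetry (e.g.\ biased dispersal on a directed ring). Falling back to symmetric kernels therefore proves a strictly weaker statement. Without this step you have not excluded a transverse instability, so neither the center-manifold identification nor neutral stability follows.

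The missing idea is short, and it is what the paper uses.

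\emph{Fix the sign.} With $\mathbf{u}\ge0$ and $J=D-\mathbf{u}\mathbf{1}^T$, the matrix determinant lemma gives $\det(J-\mu I)=\prod_\alpha(D_\alpha-\mu)\bigl[1+\sum_\alpha u_\alpha/(\mu-D_\alpha)\bigr]$. Your version, $1+\sum_\alpha u_\alpha/(D_\alpha-\mu)$, has the wrong sign.

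\emph{Take real parts.} For $\mathrm{Re}\,\mu\ge0$ one has $\mathrm{Re}\,[u_\alpha/(\mu-D_\alpha)]=u_\alpha\,\mathrm{Re}(\mu-D_\alpha)/|\mu-D_\alpha|^2>0$, since $u_\alpha>0$ and $\mathrm{Re}\,D_\alpha<0$. So the bracket has real part exceeding $1$ and cannot vanish. The remaining candidates, $\mu=D_\alpha$, already lie in the open left half-plane.

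Your weighted-norm idea also works with the right weight. Conjugating by $\mathrm{diag}(\sqrt{u_\alpha})$ turns $J(\mathbf{k})$ into $D-\mathbf{s}\mathbf{s}^T$ with $s_\alpha=\sqrt{u_\alpha}$. Its Hermitian part $\mathrm{Re}\,D-\mathbf{s}\mathbf{s}^T$ is negative definite, so the numerical range, and hence the spectrum, lies in $\mathrm{Re}\,\mu<0$.

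A plain Gershgorin bound, by contrast, fails for long-wavelength modes once $S\ge3$. The row discs are centred at $D_\alpha-u_\alpha$ with radius $(S-1)u_\alpha$, so they reach $\mathrm{Re}\,D_\alpha+(S-2)u_\alpha$. That is positive whenever $|\mathrm{Re}\,D_\alpha|$ is small.
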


\noindent
The previous theorem shows that coexistence is possible in homogeneous environments when the ratio $e_\alpha / z_\alpha$ is equal for all species, where $z_\alpha=\sum_{j=1}^{N} K_{\alpha, ij}$ is the constant Fourier mode. Since $z_\alpha$ is proportional to the average dispersal kernel, we can interpret the ratio $e_\alpha/z_\alpha$ as the inverse of fitness. Hence, we define $r_\alpha=z_\alpha/e_\alpha$ as the (average) habitat-mediated fitness, reflecting how the effective dispersal kernel arises from the interplay between habitat heterogeneity and diffusion–colonization processes.

We now consider the case where species differ in fitness. According to \cref{th:theorem-TI-multispec}, such species cannot coexist. The following theorem establishes that the single species with the highest habitat-mediated fitness will ultimately be the only one to persist.
\begin{theorem}
	\label{th:monodominance}
	Consider the metacommunity equations in \cref{th:theorem-TI-multispec}, and further assume that $e_1/z_1 < e_\alpha/z_\alpha$ for all $\alpha \neq 1$ so that the first species has the largest habitat-mediated fitness. Furthermore, assume $e_1<z_1$, which is a necessary condition for the survival of the first species in the absence of other competing species ($z_1$ is the metapopulation capacity of the first species). Then, $p^*_{\alpha}= p_1^*\delta_{\alpha 1}$ and $p_{\beta i}^*=0$ for $\beta \neq 1$, the state in which only the first species survives and occupies all patches uniformly is a stable fixed-point solution of the dynamics.
\end{theorem}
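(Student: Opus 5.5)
The plan is a linear stability analysis of the monodominant state, diagonalized by Fourier modes, since every $\hat K_\alpha$ is translation-invariant (circulant on the lattice $\Lambda$).

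\emph{Step 1: the fixed point.} First we check that $p^*_{\alpha i}=p_1^*\delta_{\alpha 1}$ is stationary. For $\beta\neq 1$ the right-hand side of \cref{eq:main-eq-multispec-TI-case} vanishes identically, because it is proportional to $p_{\beta j}=0$. For species 1 the system reduces to the single-species dynamics \cref{eq:model-one-spec-const-row-sum} with row sum $\tilde K_0=z_1$. The uniform ansatz then gives $-e_1p_1^*+(1-p_1^*)z_1p_1^*=0$, so $p_1^*=1-e_1/z_1\in(0,1)$, using the assumption $e_1<z_1$.

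\emph{Step 2: linearization.} Next we set $p_{\alpha i}=p^*_{\alpha i}+\epsilon_{\alpha i}$ and linearize. For $\beta\neq1$, the crowding factor at the fixed point is $1-p_1^*=e_1/z_1$. This gives $\dot\epsilon_{\beta i}=-e_\beta\epsilon_{\beta i}+(e_1/z_1)\sum_jK_{\beta,ij}\epsilon_{\beta j}$. These equations are decoupled from the other perturbations, since the cross terms involve products $\epsilon\cdot p^*_\beta=0$. For species 1 we obtain
\begin{equation*}
\dot\epsilon_{1i}=-e_1\epsilon_{1i}+(e_1/z_1)\sum_jK_{1,ij}\epsilon_{1j}-p_1^*z_1\sum_{\beta}\epsilon_{\beta i}.
\end{equation*}
The Jacobian is therefore block upper-triangular, with the invader blocks for $\beta\neq1$ being autonomous. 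Its spectrum is the union of the spectra of the diagonal blocks, namely the invader blocks and the resident block $-e_1-p_1^*z_1+(e_1/z_1)\hat K_1$.

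\emph{Step 3: Fourier analysis.} Each circulant block is diagonalized by Fourier modes $e^{i\mathbf k\cdot \mathbf r}$, with eigenvalues $\hat K_\alpha(\mathbf k)=\sum_j K_{\alpha,ij}e^{i\mathbf k\cdot(\mathbf r_j-\mathbf r_i)}$. Since the entries are nonnegative, $\mathrm{Re}\,\hat K_\alpha(\mathbf k)\le|\hat K_\alpha(\mathbf k)|\le z_\alpha$.
\begin{itemize}
\item For an invader $\beta\neq1$, the growth rates are $-e_\beta+(e_1/z_1)\hat K_\beta(\mathbf k)$. Their real parts are bounded by $-e_\beta+e_1z_\beta/z_1=z_\beta(e_1/z_1-e_\beta/z_\beta)$, which is strictly negative by the fitness-ordering hypothesis.
\item For the resident block, the real parts are at most $-e_1-p_1^*z_1+e_1=-p_1^*z_1<0$.
\end{itemize}
All eigenvalues thus have strictly negative real part, so the state is linearly, hence locally, stable.

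\emph{Main obstacle.} The delicate step is justifying the block-triangular structure and the bound on $\hat K(\mathbf k)$. The strict inequality for invaders relies exactly on the strict fitness ordering together with $|\hat K_\beta(\mathbf k)|\le z_\beta$, and this is what forbids any Fourier mode from destabilizing the state. We must also note that perturbations with $\epsilon_{\beta i}<0$ are not physical, but this does not affect the linear-stability conclusion.
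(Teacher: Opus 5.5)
Your proposal is correct and takes essentially the same route as the paper. The Jacobian at the monodominant state is block upper-triangular, and each diagonal block's leading eigenvalue is bounded using the constant row sum $z_\alpha$ of the translation-invariant kernels, giving $e_1-z_1<0$ for the resident block and $-e_\beta+(e_1/z_1)z_\beta<0$ for each invader block. The only difference is that the paper invokes the row-sum (Perron--Frobenius) bound on the spectrum, whereas you use Fourier modes with $\mathrm{Re}\,\hat K_\alpha(\mathbf k)\le|\hat K_\alpha(\mathbf k)|\le z_\alpha$; this is equivalent here and makes the bound on real parts slightly more explicit.
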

\Cref{fig:frag-coex-selfc} provides a visual illustration of the two main results presented in this section. For clarity, we focus on the simplest case of two competing species. \Cref{fig:frag-coex-selfc}a shows the time evolution of all habitat patches, using different colors for the two species, when $r_1 = r_2$. In this case, \cref{th:theorem-TI-multispec} predicts stable coexistence, with the coexistence manifold forming a center manifold. Indeed, the dynamics rapidly collapse to a spatially uniform state in which all patches of the same species have identical values, corresponding to the relaxation of spatial modes with negative eigenvalues discussed in the proof of \cref{th:theorem-TI-multispec}. The point of the center manifold to which the system converges depends on the initial condition. Because the leading eigenvalue on the center manifold is zero, the manifold is marginally stable. A marginally stable stationary coexistence state is observed in several other ecological models \cite{posfai2017metabolic, tikhonov2017collective, mouquet2002coexistence}.

\begin{figure}
	\centering
	\includegraphics[width=\textwidth]{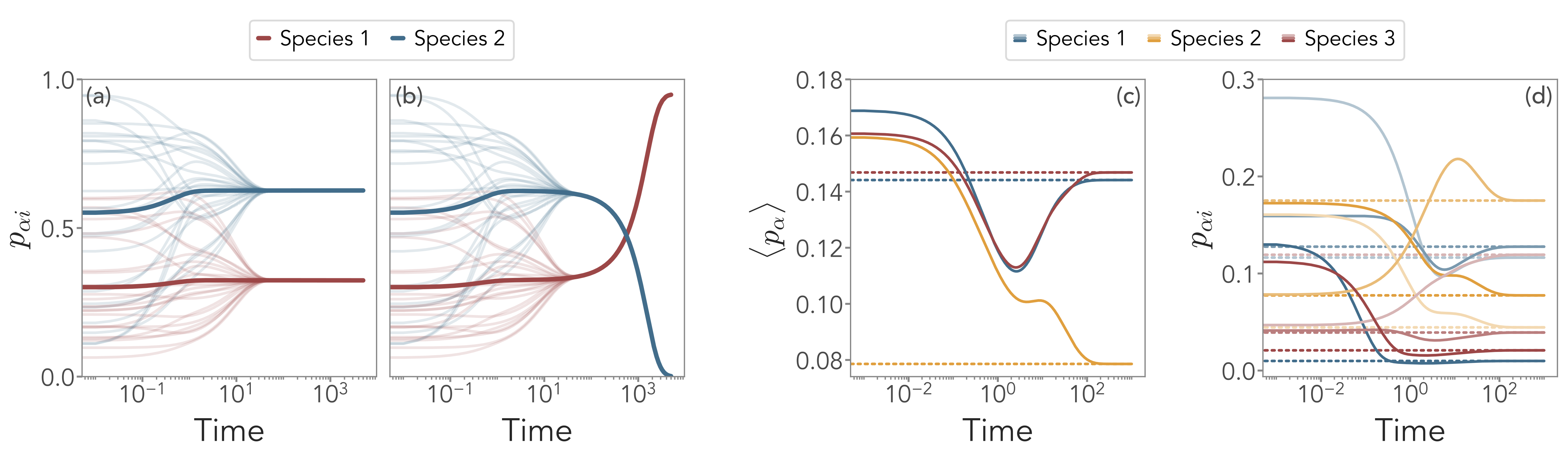}
	\caption{(a) Fine-tuned fragile coexistence in homogeneous environments. Time evolution of $p_{\alpha i}$ (the light colored lines correspond to different patches while the dark colored lines correspond to the averages across all patches, $\langle p_\alpha\rangle$), $i=1,\dots, N=25$ and $\alpha=1,2=S$ in a translationally invariant homogeneous environment (the kernel is computed from a 2D regular grid). The two species differ in exploration efficiency, with $f_1 = 1$ and $f_2 = 3$. a) Species’ (average) extinction rates are chosen such that $r_1 = e_1 / z_1 = e_2 / z_2 = r_2=0.002$, with $e_1 = 0.1, e_2=0.15$. These conditions satisfy the hypotheses of \cref{th:theorem-TI-multispec}, and the metacommunity dynamics converge to the center manifold.
    (b) Here, $r_2$ is $1\%$ smaller than in (a), leading to the extinction of species 2. The transient dynamics remain close to the center manifold, which acts as a slow manifold. Since the point on the center manifold reached by the trajectory depends on the initial condition, species 2 exhibits higher abundance during the transient phase, even though it ultimately declines to zero. 
    (c) Convergence of the dynamical evolution of the average $\ev{p_\alpha}$ to the fixed-point solution, with three species (dotted lines) given by \cref{eqn:Self-consistent-stationary} for the kernel given by \cref{eqn:MF-scaling}.
    (d) Same, but for the average abundance of each species in each patch, given by \cref{eqn:SS-general}. Here, $r_{\alpha i}$ were chosen from a lognormal distribution with mean 1.5 and standard deviation equal to 2.}
	\label{fig:frag-coex-selfc}
\end{figure}

\Cref{fig:frag-coex-selfc}b shows what happens when the metacommunity is initialized with the same conditions and parameters, except for a slightly reduced habitat-mediated fitness parameter in species two. The initial transient is nearly identical to the previous case, and the system temporarily settles into the same homogeneous state. However, the homogeneous coexistence solution subsequently decays, with species two eventually going extinct and species one taking over, as prescribed by \cref{th:monodominance}. Notably, the transient homogeneous coexistence state relaxes slowly and depends on the initial condition, as it lies close to the zero eigenvalue of the previous scenario.

\subsection{Robust coexistence and niche formation in heterogeneous habitats}
\noindent
To understand how habitat heterogeneity shapes ecosystem diversity, we now consider the general setting in which extinction rates vary across space: $e_{\alpha i}$ depends on both species $\alpha$ and the habitat patch $i$. The starting point is the full metapopulation model,
\begin{equation}
	\dot{p}_{\alpha i}  = -e_{\alpha i} p_{\alpha i} + \left(1 - \sum_{\beta=1}^{S} p_{\beta i} \right) \sum_{j = 1}^N K_{\alpha, ij} p_{\alpha j}.
	\label{eqn:SI:model-general-MS}
\end{equation}
A direct substitution shows that, in this more general scenario, the spatially uniform coexistence solution discussed in the previous section is no longer a stationary state of \cref{eqn:SI:model-general-MS}.

To derive explicit and minimal analytical conditions for coexistence, we first examine the mean-field limit, in which dispersal depends only on the species and not on specific patches. To ensure a meaningful limit for $N\to \infty$, we let the kernel elements scale as
\begin{equation}
	K_{\alpha, ij} = \frac{K_{\alpha}}{N}.
	\label{eqn:MF-scaling}
\end{equation}
which corresponds to a fully connected network in \cref{eqn:kernel-multispec}. We define the spatial average abundance as
\begin{equation}
	\frac{1}{N} \sum_{j} p_{\alpha j} = \langle p_{\alpha} \rangle,
\end{equation}
and the free space in patch $i$ as
\begin{equation}
	Q_i = 1 - \sum_{\beta=1}^S p_{\beta i}.
	\label{eqn:Qi-def}
\end{equation}
We can now write the following equation for the steady-state solution of \cref{eqn:SI:model-general-MS}:
\begin{equation}
	p^*_{\alpha i} = {r_{\alpha i} Q_i \langle p^*_{\alpha} \rangle},
	\label{eqn:SS-general}
\end{equation}
where $Q_i$ is implicitly determined by all $p^*_{\alpha i}$, and we define
\begin{equation}\label{eqn:fit-def}
	r_{\alpha i} = \frac{K_{\alpha}}{e_{\alpha i}}
\end{equation}
as the \emph{local} habitat-mediated fitness of species $\alpha$ in patch $i$, as $r_{\alpha i}$ quantifies the balance between colonization and extinction within each patch, and results from the interplay of species traits with local landscape properties. 
Using \cref{eqn:Qi-def}, \cref{eqn:SS-general}, averaging over patches yields
\begin{equation}\label{eqn:Q}
    Q_i= \left(1+ \sum_\alpha r_{\alpha i} \langle p^*_\alpha\rangle\right)^{-1}.
\end{equation}
and, using again \cref{eqn:SS-general}, we obtain the self-consistency equation:
\begin{equation}
	\langle p^*_\alpha \rangle = \frac{\lr{p^*_{\alpha}}}{N} \sum_{i=1}^N r_{\alpha i} \left( 1 + \sum_{\beta=1}^S r_{\beta i} \langle p^*_\beta \rangle \right)^{-1}.
	\label{eqn:Self-consistent-stationary}
\end{equation}
For any fixed choice of ${r_{\alpha i}}$, \cref{eqn:Self-consistent-stationary} can be solved numerically for $\langle \vec{p}\rangle$, which in turn allows the reconstruction of the full solution $p_{\alpha i}$ $\forall \alpha,i$ by exploiting \cref{eqn:SS-general}. 
\Cref{fig:frag-coex-selfc}c shows an example of how $\lr{p_\alpha}(t)$ (solid lines) evolves toward the self-consistent solution (dotted lines) of \cref{eqn:Self-consistent-stationary}, while \Cref{fig:frag-coex-selfc}d shows the same for individual patches, \cref{eqn:SS-general}.

To make analytical progress, we now treat these fitness values as quenched random variables. For simplicity, we drop the asterisk from now on, \ie $p^*_{\alpha i} \equiv p_{\alpha i}$. We focus on coexistence solutions, defined by
\begin{equation}
	\langle p_\alpha \rangle > 0 \quad \forall \alpha = 1, \ldots, S.
\end{equation}
In the limit $N \to \infty$, \cref{eqn:Self-consistent-stationary} becomes the following self-consistency relations for the mean densities $\langle p_{\alpha}\rangle$ 
\begin{equation}
	1 = \frac{1}{N} \sum_{i=1}^N r_{\alpha i} \left( 1 + \sum_{\beta=1}^S r_{\beta i} \langle p^*_\beta \rangle \right)^{-1}=\left\langle \frac{r_\alpha  }{1 + \sum_{{\beta}=1}^S r_{\beta} \langle p^*_{\beta} \rangle} \right\rangle.
	\label{eq:Selfc-palpha}
\end{equation}
Next, we assume that for each species $\alpha$, the random variables $r_{\alpha i}$ are drawn independently from a distribution $P_r(r | \vec{\zeta}_{\alpha})$ whose parameters $\vec{\zeta}_{\alpha}$ may differ across species.
We now introduce the Laplace transform of this distribution as
\begin{equation}
	W_{\alpha}(\omega) = \int_0^\infty \mathrm{d}r \, P_r(r | \vec{\zeta}_{\alpha}) \, e^{-r \omega}. 
	\label{eqn:W-definition}
\end{equation}
Using this definition, together with the identity valid for $z>0$,
\begin{equation}
	\frac{1}{z} = \int_0^\infty \mathrm{d}\lambda\, e^{-z \lambda} ,
\end{equation}
we derive from \cref{eq:Selfc-palpha} the following implicit equations for the vector $\LR{\vec{p}}$
\begin{equation}
	1 = \int_0^\infty \mathrm{d}\lambda e^{-\lambda} \left(-\frac{\partial}{\partial \omega} \ln W_{\alpha}(\omega)\right)_{\omega = \lambda \langle p^*_{\alpha} \rangle}
	\prod_\beta {W_\beta}\left(\lambda {\langle p^*_\beta \rangle}\right).
\end{equation}

\subsection{Critical habitat heterogeneity for coexistence}
\noindent
We now consider the case that $S$ is large, and perform the rescaling $x_{\beta} = S \langle p_{\beta}\rangle$ and $\lambda = S z$. 
The self-consistent equations for the rescaled mean abundances become
\begin{equation}
	1= S \int_0^\infty  \mathrm{d}z e^{-S \bar{F}(z, \vec{x})} \left( -\frac{W'_{\alpha}(z x_{\alpha})}{W_\alpha(z x_{\alpha})} \right)
	\label{eq:SC-MF-finiteS}
\end{equation}
where
\begin{equation}
	\bar{F}(z, \vec{x}) = z - \frac{1}{S} \sum_{\beta=1}^S \ln W_\beta (z \cdot x_\beta).
	\label{eqn:Fbar-def}
\end{equation}
Taking the derivative of \cref{eqn:Fbar-def} with respect to $z$ shows that
$\partial_z \bar{F}(z, \vec{x})>0$, which implies that its minimum occurs at $z = 0$. Thus, we expect the integral to be dominated by the value around $z=0$ when $S$ is large. We can perform a Taylor expansion of \cref{eqn:Fbar-def}:
\begin{equation}
	\bar{F}(z, \vec{x}) = z \left(1 + \frac{1}{S}\sum_\beta x_\beta \ev{r_\beta}\right) - \frac{z^2}{2}\frac{1}{S}\sum_\beta v_\beta^2x_\beta^2 + \mathcal{O}(z^3)
\end{equation}
where $v_\beta^2 = \ev{r^2_\beta}-\ev{r_\beta}^2$ denotes the variance of $P_r(r|\zeta_{\alpha})$.
We now introduce the following two quantities, defined as averages over species,
\begin{equation}
\overline{x\ev{r}} = \lim_{S\to \infty} \frac{1}{S} \sum_{\beta = 1}^S x_\beta \ev{r_\beta}; \qquad
\overline{x^2v^2} = \lim_{S\to \infty} \frac{1}{S} \sum_{\beta = 1}^S x_\beta^2v_\beta^2.
\end{equation}
Inserting the last three equations into \cref{eq:SC-MF-finiteS}, we obtain
\begin{align}
	1 = \frac{\ev{r_\alpha}}{1 + \overline{x\ev{r}}} + \frac{1}{S}\frac{1}{(1 + \overline{x\ev{r}})^2}\left[\frac{\ev{r_\alpha}}{1 + \overline{x\ev{r}}}\overline{x^2v^2} - x_\alpha v_\alpha^2\right] + \mathcal{O}\left(\frac{1}{S^2}\right) \; .
	\label{eqn:coexistence-eq-expanded}
\end{align}
\Cref{eqn:coexistence-eq-expanded} implies that for coexistence to be possible $\ev{r_\alpha}$ must scale as follows with respect to $1/S$:
\begin{equation}
	\ev{r_\alpha} = R + \frac{\Delta_\alpha}{S} + \mathcal{O}\left(\frac{1}{S^2}\right),
	\label{eqn:r-alpha-ensemble}
\end{equation}
where $R$ and $\Delta_\alpha$ are functions of $\vec{\zeta}_\alpha$. In particular, if $R$ is the average or ``baseline" fitness of the metacommunity, each species' deviation from such baseline must scale as $1 /S$, and can hence be indicated as $\Delta_\alpha / S$.
Next, we plug \cref{eqn:r-alpha-ensemble} into \cref{eqn:coexistence-eq-expanded} and obtain
\begin{align}
	\label{eqn:SI:consistency-last}
	1 + \overline{x\ev{r}}  &= R + \frac{1}{S} \left[\Delta_\alpha + \frac{1}{1 + \overline{x\ev{r}}}\left(\frac{R}{1 + \overline{x\ev{r}}} \overline{v^2x^2} - x_\alpha v^2_\alpha\right)\right]+ \mathcal{O}\left(\frac{1}{S^2}\right)\\
	&= R + \frac{1}{S} \left[\Delta_\alpha + \frac{1}{R}\left(\overline{v^2x^2} - x_\alpha v^2_\alpha\right)\right]+ \mathcal{O}\left(\frac{1}{S^2}\right) \; .
\end{align}
We now observe that the right-hand side cannot depend on the species index. Therefore, the rescaled stationary population $x_\alpha = S \ev{p_\alpha}$ must be 
\begin{equation}
	x_\alpha = \frac{\Delta_\alpha - H}{v_\alpha^2}R + \mathcal{O}\left(\frac{1}{S}\right),
\end{equation}
where $H$ is a constant to be determined. \Cref{eqn:SI:consistency-last} implies that 
\begin{align}\label{eqn:R-1}
	R - 1 & = \overline{x\ev{r}} =  \frac{R}{S}\sum_\alpha x_\alpha = \frac{R^2}{S}\sum_\alpha \frac{\Delta_\alpha-H}{v_\alpha^2},
\end{align}
at the leading order in $1/S$. We can now solve for $H$;
\begin{equation}
	H = \left(\frac{1}{S}\sum_\alpha \frac{1}{v_\alpha^2}\right)^{-1}\left[\frac{1}{S}\sum_\alpha \frac{\Delta_\alpha}{v_\alpha^2} - \frac{R-1}{R^2}\right] = \left(\;\overline{\frac{1}{v^2}}\;\right)^{-1}\left[ \left(\,\overline{\frac{\Delta}{v^2}}\,\right) - \frac{R-1}{R^2}\right],
\end{equation}
where we recall that $\bar{(\cdot)} := \frac{1}{S}\sum_\alpha (\cdot)$ indicates an average over species. Survival of all species occurs if
\begin{equation}
	\Delta_\alpha > H \quad \forall \alpha.
\end{equation}
We can simplify this coexistence condition assuming that $v_\alpha^2 = v^2$, that is, the disorder variance is the same for all species. In this case, the threshold for survival $H$ takes the simple form:
\begin{equation}
	\label{eqn:SI:coex_condition_alpha}
	\Delta_\alpha > H = \bar{\Delta} - v^2 \frac{R-1}{R^2} \;.
\end{equation}
where the overbar indicates the usual average over species
\begin{equation}
	\bar{\Delta} = \frac{1}{S}\sum_\alpha \Delta_\alpha.
\end{equation}
From \cref{eqn:R-1} and \cref{eqn:SI:coex_condition_alpha} it follows that $R > 1$ and 
\begin{equation}
	\text{min}_\alpha \Delta_\alpha > \bar{\Delta} - v^2 \frac{R-1}{R^2}.
\end{equation}
We can rewrite this last condition as follows
\begin{equation}
v^2 > v^2_c := \gamma_\Delta \, \frac{R^2}{R - 1}, \quad \gamma_\Delta = \bar{\Delta} - \min \Delta_\alpha,
\label{eqn:critical-disorder-final}
\end{equation}
which defines a critical value for the disorder variance, $v_c$. 
The quantity $\gamma_\Delta$ is a measure of how heterogeneous the species' average fitnesses are. Hence, \cref{eqn:critical-disorder-final} predicts that a stronger habitat disorder is needed for the coexistence of species with larger differences in their mean fitness. On the contrary, if they are all equal, \ie $\ev{r} = R$, then 
\begin{equation}
	\gamma_\Delta = 0 \;\implies\; x_\alpha = \frac{R - 1}{R},
\end{equation}
so that $R>1$ is the only condition determining either full coexistence or the extinction of all species.

\begin{figure}
\includegraphics[width=\textwidth]{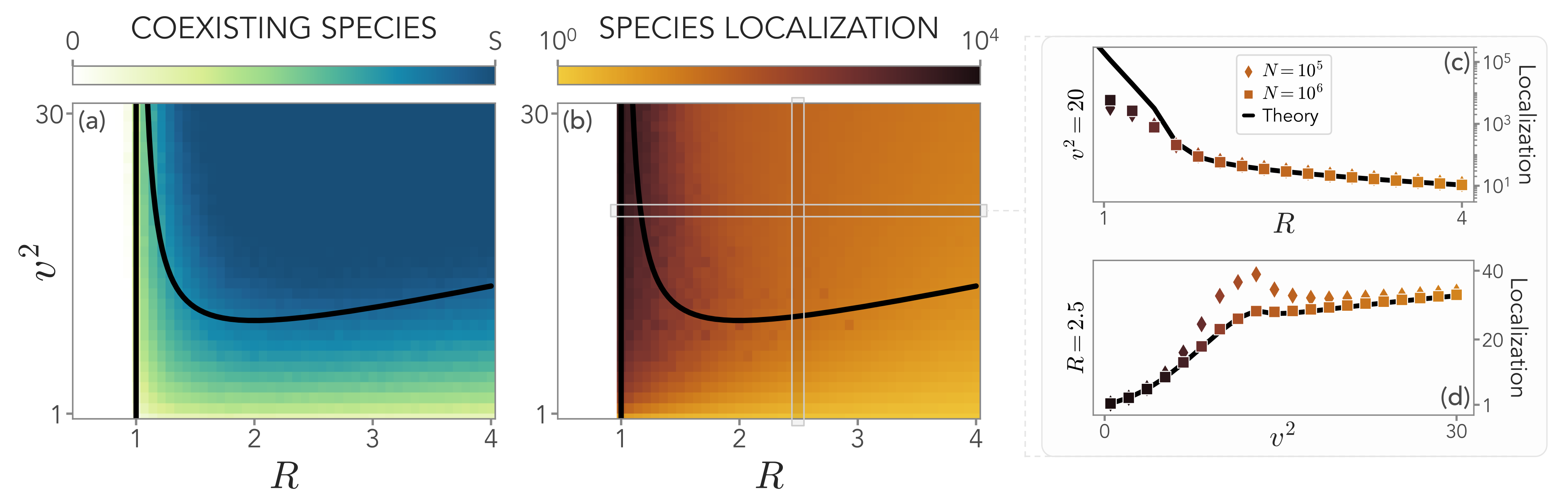}
\caption{Analytical phase diagram of coexistence regions with mean-field dispersion in heterogeneous habitats. (a) Fraction of coexisting species as a function of the baseline metacommunity fitness $R$ and habitat disorder variance $v^2$. (b) Species localization as a function of the same parameters shows increased localization in the coexisting phase and a strong increase near the global extinction boundary $R=1$. (c) Horizontal slice of panel (b) showing localization as a function of $R$, at a fixed disorder value. A strong increase in the localization near the extinction boundary is observed. (d) Vertical slice of panel (b), localization as a function of the disorder variance for a fixed $R$, which shows the increase of localization when transitioning from monodominance to coexistence, as an indication of niche formation. Numerical simulations are obtained using a log-normal distribution for the disorder. Here, $\Delta_\alpha$ is evenly spaced between $\pm 5/2$, so that $\gamma_\Delta = 2.5$. The theory is computed from numerical integration of \cref{eqn:moments-p-alpha}. An expansion for large $S$ only gives an accurate prediction for very large $S$. Adapted from ref.\ \cite{PadNic2024}.}
\label{fig:MF-coexistence}
\end{figure}

Now, we show how these analytical predictions compare to numerical simulations. In \Cref{fig:MF-coexistence}a, we choose a log-normal distribution for the local habitat-mediated fitness distribution $P_r(r)$ and show the fraction of coexisting species in the $(R, v^2)$-coordinate space for a fixed $\Delta_\alpha$. The black vertical line represents the first (co)existence condition $R>1$. Indeed, we observe that when $R<1$, no species survives. The second black line corresponds to the critical disorder variance boundary \cref{eqn:critical-disorder-final}. The transition from partial to full coexistence is smooth here because $S,N$ are finite, but becomes sharp in the limit $N,S\to \infty$. Coexistence of all species indeed requires sufficiently large habitat heterogeneity (i.e., large enough $v^2$). Interestingly, the critical variance increases as a function of $R$, as \cref{eqn:critical-disorder-final} makes clear. The intuitive interpretation is that, in this case, all species have high baseline fitness, which effectively increases competition. Hence, intermediate values of $R$ are optimal for coexistence under fixed habitat heterogeneity. This observation suggests that coexistence does not correspond to evenly spread species densities, but is rather associated with some kind of spatial localization.
To quantify localization explicitly, we use the following measure
\begin{equation}
	{\textrm{Loc}_{\alpha}} \equiv \frac{\frac{1}{N}\sum_i p_{\alpha i}^4}{\big(\frac{1}{N} \sum_i p_{\alpha i}^2 \big)^2} = \frac{\langle p_{\alpha}^4 \rangle}{\langle p_{\alpha}^2 \rangle^2},
	\label{eqn:localization-def}
\end{equation}
which is related to the inverse participation ratio. If the distribution of species $\alpha$ is uniform within a subset $E$ covering $K\leq N$ patches, then $p_{\alpha i} \approx \mathbf{1}_E(i) p^*$, where $\mathbf{1}_E$ is the indicator function of set $E$. Substituting into \cref{eqn:localization-def} we obtain
\begin{equation}
{\textrm{Loc}_{\alpha}} = \frac{N}{K}.
\end{equation}
Hence, when the species occupies all patches evenly $\textrm{Loc}_{\alpha} \approx 1$, and $\textrm{Loc}_{\alpha} \approx N$ when it is present only within a single patch.
To find an analytical expression for the localization, we need to compute the moments of $p_{\alpha}$. For this purpose, we combine \cref{eqn:SS-general} with \cref{eqn:Q}, yielding
\begin{equation}
	{p_{\alpha i}} = \frac{\LR{p_\alpha} r_{\alpha i}}{1+\sum\limits_{\beta=1}^S r_{\beta i}\lr{p_\beta}},
\end{equation}
which provides a link between the moments $r_{\alpha}$ and those of $p_\alpha$. Next, by employing the following identity, which holds for $z>0$,
\begin{equation}
	\frac{1}{z^m} = \frac{1}{\Gamma[m]} \int_0^\infty \mathrm{d}\lambda\, \lambda^{m-1} e^{-z \lambda},
\end{equation}
we can proceed as in the previous section to obtain
\begin{equation}
	\lr{p_{\alpha}^m}	= \frac{ x_\alpha^m}{\Gamma(m)} \int_0^\infty dz \, z^{m-1} e^{-S \bar{F}(z, x)} \frac{W^{(m)}(z x_{\alpha})}{W(z x_{\alpha})},
	\label{eqn:moments-p-alpha}
\end{equation}
where $x_\alpha = \lr{p_\alpha} S$, as above, and $W_{\alpha}(\omega)$ was defined in \cref{eqn:W-definition}. 

In \Cref{fig:MF-coexistence}b, we plot the localization, averaged over species, in the same $(R,v^2)$ coordinate space, while \Cref{fig:MF-coexistence}c-d show a horizontal and a vertical slice, respectively. The general behavior of the localization confirms the intuition that the coexisting phase is characterized by increased localization and, hence, reflects spatial niche formation. To show more explicitly that localization increases markedly as the habitat disorder $v^2$ grows and the system transitions from the monodominance to the coexistence phase, we consider a vertical cut of the phase space (\Cref{fig:MF-coexistence}d), where simulation data are compared to the theory for two values of the system size $N$. Another notable observation from \Cref{fig:MF-coexistence}b is that the localization undergoes a sharp increase as the global extinction boundary $R>1$ is approached. The ecological interpretation here is that near this threshold, species maximize their spatial segregation, as any competition could push them below the extinction threshold.

\begin{figure}
\includegraphics[width=\textwidth]{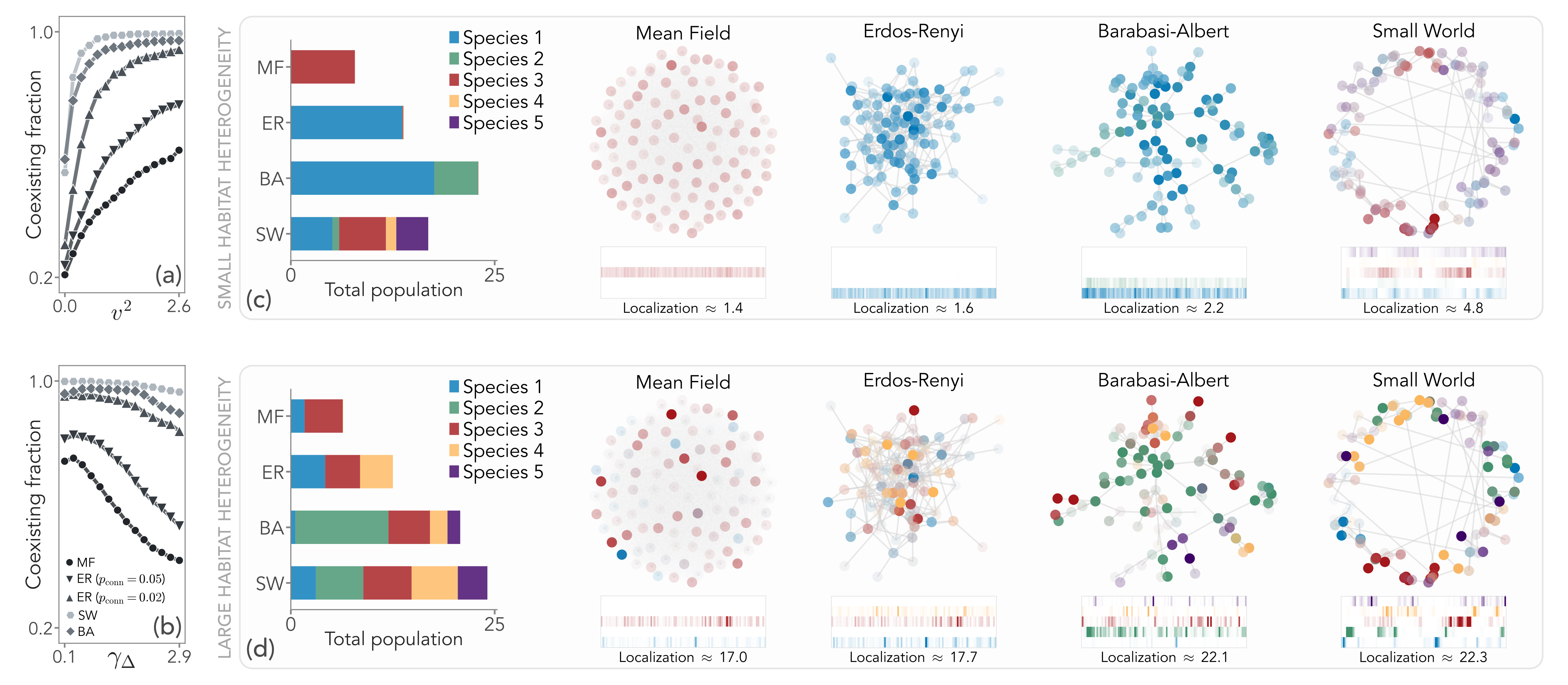}
\caption{Fraction of coexisting species as a function of habitat disorder (a) and species heterogeneity (b) for different network topologies: \ER (ER) with two different sparsity levels, small-world (SW), and Barabási-Albert (BA), compared to the mean-field (MF) kernel (see legend in panel b inset). In panel (a), $R = 2$, $\gamma_\Delta = 1$, $S = 5$, and $N = 100$. In panel (b), the same parameters are used with $v^2 = 3$. Due to finite-size effects, full coexistence is not observed in the mean-field case, even when $\gamma_\Delta$ is small. (c-d) At sufficiently high habitat heterogeneity, dispersal networks support a larger number of coexisting species and a higher total population than MF at constant average habitat-mediated fitness. Color transparency indicates the population density in each patch. Here, we set $\gamma_\Delta = 0$. Parts of figure adapted from \cite{PadNic2024}}
\label{fig:network-structure}
\end{figure}

\subsection{Landscape structure fosters coexistence}
\noindent
The mean-field kernel studied in the previous subsection corresponds to an all-to-all dispersal network. We can now compare the qualitative picture of the mean-field analytics to the more general scenario of a kernel obtained from a generic network using \cref{eqn:kernel-multispec}. More specifically, we contrast the mean-field (MF) results with three prototypical networks: \ER (ER) networks with two different average connection probabilities, thus illustrating the effects of network sparsity; Barabási-Albert (BA) networks, characterized by the presence of highly connected nodes; and small-world (SW) networks, which feature a high clustering coefficient. {To perform this comparison in practice, in analogy to \cref{eqn:fit-def}, we can define the local species fitness as $r_{\alpha i} = N \ev{K_\alpha}/e_{\alpha i}$, where $N \ev{K_\alpha}\equiv \sum_i K_{\alpha i}$, which reduces to \cref{eqn:MF-scaling} in the mean field case.} Then, we consider a realization of the quenched disorder $r_{\alpha i}$ and, for each species $\alpha$,  compute the kernel elements $K_{\alpha,ij}$ and its average $\ev{K_{\alpha}}$, from which we obtain the extinction rates as $e_{\alpha i} = N \ev{K_\alpha}/r_{\alpha i}$. In this way, $\ev{r_\alpha}$ does not depend on the specific network, ensuring comparable survivability for a given species $\alpha$ across different landscape topologies.
\Cref{fig:network-structure}a shows that the main finding remains qualitatively unchanged: the fraction of coexisting species increases with habitat heterogeneity $v^2$ across all networks, as in the MF case. However, the structure of the dispersal kernel quantitatively shifts the transition to coexistence, making it easier to achieve. ER networks support a larger fraction of coexisting species at a given disorder strength. Interestingly, coexistence is further enhanced as sparsity increases. At the same fixed level of sparsity, coexistence is also enhanced in BA networks and in a SW topology (\Cref{fig:network-structure}a). 

We can also observe our key result on the critical disorder strength \cref{eqn:critical-disorder-final} from a different perspective: at a fixed habitat disorder strength $v^2$ and baseline fitness of the community, there is a maximum level of species heterogeneity that can be supported. Indeed, \Cref{fig:network-structure}b shows that the fraction of coexisting species decreases with $\gamma_\Delta$ across all kernel types.

Finally, we visualize the spatial niche formation and the qualitative effect of kernel structure in \Cref{fig:network-structure}c. For simplicity,  we consider the ``neutral" case, in which $\gamma_\Delta = 0$ and the competing species have equal average habitat-mediated fitness.  For the same realization of $r_{\alpha i}$, structured landscapes support a larger number of species as well as a higher total population (\Cref{fig:network-structure}c). Interestingly, although all species in this example have the same average fitness, the interplay between spatial heterogeneity and the dispersal network structure determines which species survive where, leading to uneven population distributions (\Cref{fig:network-structure}c). For instance, species two (green) has the largest population in the BA network because it happens to have higher fitness in the major network hub. The clusters in the SW network appear to strike a balance between segregation, which buffers competition, and colonization boost, which increases population. Hence, landscape structure plays a key role in shaping both species survival and their spatial distribution.

\section{Outlook - Incorporating general ecological processes into dispersal dynamics}\label{sec:outlook}
\noindent
Until now, we focused on how species survival and coexistence are shaped by the interplay between dispersal across landscapes -- in different scenarios of spatial homogeneity and heterogeneity -- and competitive interactions with other species. We specifically examined competitive interactions, given their fundamental and widespread role in driving species exclusion from a metacommunity. However, a broad spectrum of ecological interactions can influence species survival within a metacommunity and affect overall ecosystem biodiversity. These processes include density-dependent birth and death rates, trophic interactions, or facilitation. Below, we give a few examples of how such processes can be incorporated into the present framework, focusing on the deterministic rate-based perspective.

One first extension is to allow for settled individuals to give birth directly, without going through the generation of explorers. {In the case of plants, this could represent vegetative propagation, which consists of new plants growing from vegetative parts such as roots, stems, or leaves.} In this scenario, we add the following reaction
\begin{equation}
	S_{\alpha i} + \varnothing_i \xrightarrow{b_{\alpha i}} 2~S_{\alpha i}.
	\label{eqn:reactions-settled-birthdeath}
\end{equation}
This reaction does not affect the explorer dynamics. In the fast-exploration limit discussed above, the effective kernel \cref{eqn:kernel-multispec}  also remains unchanged, and the metacommunity rate equations become
\begin{align}
	\dot{p}_{\alpha i}  &= - e_{\alpha i} p_{\alpha i} + b_{\alpha i}\left(1 - \sum_{\beta=1}^{S} p_{\beta i} \right) p_{\alpha i} + \left(1 - \sum_{\beta=1}^{S} p_{\beta i} \right) \sum_{j = 1}^N K_{\alpha, ij} p_{\alpha j}\\
	&= - (e_{\alpha i} - b_{\alpha i}) p_{\alpha i} -  \sum_{\beta=1}^{S} b_{\alpha i} p_{\alpha i} p_{\beta i} + \left(1 - \sum_{\beta=1}^{S} p_{\beta i} \right) \sum_{j = 1}^N K_{\alpha, ij} p_{\alpha j},
	\label{eqn:model-with-local-birth}
\end{align}
which makes it clear that the addition of this reaction contributes a logistic term, which can be split into a linear growth and a quadratic self-limitation term that depends on the local abundance of all species.

Another ecological reaction leading to a negative quadratic term  is the following:
\begin{equation}
	S_{\alpha i} + S_{\beta i} \xrightarrow{\Gamma_{\alpha \beta}} \varnothing_i + S_{\beta i},
	\label{eqn:reactions-extra-crowding}
\end{equation}
which models  generic competitive interference between different species (if $\alpha \neq \beta$) or an intra-species self-limitation term (if $\alpha = \beta$). \Cref{eqn:reactions-extra-crowding} models the death of a settled individual of species $\beta$ caused by the presence of a settled individual of species $\alpha$. Here, for notational simplicity, we assume that the competitive interference matrix $\Gamma_{\alpha \beta}$ does not depend on the patch index. This reaction leads to the following rate dynamics:
\begin{align}
	\dot{p}_{\alpha i}  &= - e_{\alpha i} p_{\alpha i} - \sum_{\beta=1}^{S}\Gamma_{\alpha \beta} p_{\alpha i}  p_{\beta i} + \left(1 - \sum_{\beta=1}^{S} p_{\beta i} \right) \sum_{j = 1}^N K_{\alpha, ij} p_{\alpha j},
	\label{eqn:model-with-competitive-interference}
\end{align}
which shows that this process also leads to a quadratic loss term. An equivalent rate description can be obtained by assuming that the death rate of settled individuals includes a density-dependent component:
\begin{equation}
	S_{\alpha i} \xrightarrow{ e_{\alpha i} +  p_{\beta i} \Gamma_{\alpha \beta}} \varnothing_i
	\label{eqn:dynamic-death-rate}.
\end{equation}
A quadratic self-interaction term (i.e. with only species-diagonal elements) can also arise as a simplified representation of the Janzen-Connell hypothesis \cite{Comita_2014}, that is, assuming that each species is affected by specific pathogens. These pathogens undergo logistic growth, tied to the abundance of the infected species:
\begin{equation}
\dot{\rho}_{\alpha i} =  \rho_{\alpha i} \left( p_{\alpha i}  - \frac{\Gamma_{\alpha \alpha}}{\kappa_\alpha} \rho_{\alpha i}\right),
\label{eqn:pathogen-1}
\end{equation}
where $\rho_{\alpha i}$ is the normalized density of pathogens. Assuming that the infection and recovery dynamics of the pathogen cycle are fast relative to the birth/death timescales of the settled population, we obtain from \cref{eqn:pathogen-1}:
\begin{equation}
	 p_{\alpha i}  =  \frac{\Gamma_{\alpha \alpha}}{\kappa_\alpha}  \rho_{\alpha i}.
	\label{eqn:pathogen-2}
\end{equation}
Introducing a pathogen-dependent death rate for species $\alpha$:
\begin{equation}
	S_{\alpha i} \xrightarrow{ e_{\alpha i} +  \rho_{\alpha i} \kappa_\alpha} \varnothing_i
	\label{eqn:dynamic-death-rate-pathogen}
\end{equation}
ultimately leads to:
\begin{align}
	\dot{p}_{\alpha i}  &= - e_{\alpha i} p_{\alpha i} - \Gamma_{\alpha \alpha} p_{\alpha i}^2 + \left(1 - \sum_{\beta=1}^{S} p_{\beta i} \right) \sum_{j = 1}^N K_{\alpha, ij} p_{\alpha j}.
	\label{eqn:model-with-pathogens}
\end{align}

Local interactions could also cause a species to be positively influenced by the presence of another, as in the case of facilitation or trophic interactions \cite{Bimler2025}.  In this scenario, the effective metacommunity description would be
\begin{align}
	\dot{p}_{\alpha i}  &= - e_{\alpha i} p_{\alpha i} +  \sum_{\beta=1}^{S}\Phi_{\alpha\beta}  p_{\alpha i}p_{\beta i}   +   \left(1 - \sum_{\beta=1}^{S} p_{\beta i} \right) \sum_{j = 1}^N K_{\alpha, ij} p_{\alpha j},
	\label{eqn:model-with-LV-interaction}
\end{align}
in which the interaction matrix $\Phi_{\alpha\beta}$ can have positive, negative, or zero entries according to the type of local interaction. To ensure that the local density does not exceed unity, the sign and magnitude of the $\Phi$ matrix entries need to obey specific constraints. For instance, the reaction
\begin{equation}
	S_{\alpha i} + S_{\beta i} \xrightarrow{\Phi_{\beta \alpha}}  2 S_{\alpha i},
	\label{eqn:reactions-extra-predation-alternate}
\end{equation}
which could be a simple model of a predator-prey  \cite{McKaneNewman2004_PRE}, implies that $\Phi_{\alpha\beta}=-\Phi_{\beta\alpha}$. More generally, requiring that, whenever $\Phi_{\alpha\beta} > 0$, the corresponding entry $\Phi_{\beta\alpha}$ is negative and at least as large in magnitude is sufficient to guarantee that the settled densities remain bounded by one.

\section{Concluding remarks}\label{sec:conclusion}
\noindent
This work refines and unifies recent developments in the mechanistic theory of metacommunity dynamics \cite{NicPad2023,PadNic2024,DoiNic2025}, providing a systematic framework to derive effective models that explicitly link microscopic ecological processes, landscape structure, and emergent community patterns. 

By starting from a microscopic description and coarse-graining to effective equations, we have shown how a wide range of ecological processes -- dispersal, colonization, extinction, stochasticity, and interspecific interactions -- can be incorporated within a single, analytically tractable formalism. \new{The derivation of the effective dispersal kernel that enables such compact description rests on the hypothesis of time-scale separation between dispersal processes and birth / death dynamics. Examples for which this separation is rather clean-cut include terrestrial plants and forests, in which dispersal occurs through fast vectors such as wind or animal-mediated seed transport relative to the long lifetimes of established individuals \cite{Nathan2000}; marine invertebrates, where sessile adults (e.g. corals or mussels) are coupled by rapidly dispersing planktonic larvae \cite{Cowen2009}; and fungi, where persistent local colonies are embedded in a continuous background of airborne spore dispersal \cite{Norros2014}. A similar, albeit less sharp, separation can emerge after spatial coarse-graining in terrestrial animals such as butterflies or resident birds, where local populations are relatively stable while dispersal events are infrequent but spatially extensive \cite{Hanski2003,Fuirst2021}.
Conversely, this assumption is expected to break down in systems where movement and demographic processes occur on comparable time scales. This is notably the case for many microbial communities, where growth, death, and dispersal are tightly coupled at the local scale, leading to range expansions and spatial dynamics that are driven by the same processes governing population turnover \cite{Hallatschek2010}. However, externally driven transport (e.g. fluid flow or host-mediated dispersal) can reintroduce an effective separation at larger scales. More generally, highly mobile organisms without clear site fidelity, or systems lacking a distinction between dispersal and reproductive stages, fall outside the regime in which a simple effective dispersal kernel provides an accurate coarse-grained description.}

In the single-species case, the structure of the dispersal network and the heterogeneity of local habitats jointly determine species persistence. In particular, \cref{th:theorem1} demonstrates how dispersal couples local habitat variations into a global property that governs persistence. This formal result aligns with empirical observations suggesting that local habitat heterogeneity can impact the global population status through dispersal \cite{SirHam2004heterogeneoushunting,RamTie2007heterogeneousmicrobialsystem,ZhangDeAngelis2020yeastheterogeneous}.

In the stochastic regime, we have connected extinction-time statistics and finite-size scaling exponents to underlying ecological parameters, revealing a universal scaling structure for noise-driven extinction. Future investigations could explore how persistence-time distributions and associated statistics change with variation in the microscopic reactions and/or dispersal-network properties, for instance by applying tools from response theory and stochastic inference in systems with absorbing boundary conditions \cite{Gandhi_1998, Yaari_shnerb2012persistence_scaling, Ovaskainen_2010, BerLin2020, padmanabha2023generalisation, kessler_schnerb2023, Su_2023, Keidar2024, AguMunAza2025inference}.

Extending the framework to multiple interacting species, we find that space effectively acts as the limiting resource: in homogeneous environments, competitive exclusion leads to monodominance except for a fine-tuned neutral manifold, in which the homogeneous coexistence state depends on the initial condition. Interestingly, we observed that a small amount of disorder does not support biodiversity, and we showed that a critical level of habitat heterogeneity is needed to create stable ecological niches and enable coexistence. This critical transition, smoothed for finite systems, might be a way of reconciling contrasting reports of the effect of habitat heterogeneity on species coexistence \cite{tamme2010environmental,Allouche_2012,Hortal_2013,Carnicer_2013,Hamm_2017,herberich2023environmental,MillerAllesina2023}.

The generality of the microscopic construction makes it possible to extend the present framework to other ecological processes. Future efforts should investigate the effect of the additional terms on species coexistence. Quadratic self-limitation terms are expected to introduce a self-regulation that buffers interspecific competition and could stabilize coexistence \cite{ktw2017goldenfeld}. Incorporating predator-prey or trophic interactions could likewise alter the coexistence conditions \cite{Kang_2024stoch_encounter} and possibly cause oscillations \cite{McKane2005}, chaotic population dynamics \cite{roy2020complex, pearce2020stabilization,altieri2022effects,mallmin2024chaotic}, offering a route to explore the emergence of non-equilibrium behavior in structured landscapes beyond the fixed-point equilibria observed here. 

A key challenge for future work lies in connecting theoretical parameters to empirical data. While the dispersal network might be inferred from spatial connectivity or movement data, detailed local rates and interaction parameters may be difficult to estimate. Recent work explored a promising approach assuming a suitable random kernel and inferring distribution parameters from empirical abundances in data-rich communities \cite{BerNic2026}. Altogether, this framework establishes a versatile and extensible foundation for studying spatially structured ecosystems through the lens of non-equilibrium statistical mechanics, enabling both theoretical advances and data-driven applications.

\bibliography{jstat-metapop}

%\clearpage

\appendix
\section{Proofs of theorems of \cref{sec:onespecies-theorems}}
\label{app:proofs-one}

\begin{proof}[Proof of \cref{th:theorem1}]
    We begin by linearizing \cref{eq:model-one-spec-lean} around the extinction solution $\vec{p}=0$, which yields 
	\begin{equation}
		\label{eq:linearized-extinction-1S}
		\dot p_i = -e_i p_i + \sum_{j=1}^N K_{ij} c_j p_j.
	\end{equation}
	According to the Perron-Frobenius theorem, the eigenvalue of the generalized landscape matrix $\hat{\mathcal{M}} = \hat{K}\hat{\mathcal{C}}\hat{\mathcal{E}}^{-1}$ with the greatest modulus is real and positive, denoted by $\lambda_\mathrm{max}$, and we can associate with it a left eigenvector $\vec w$ with strictly positive entries.  We now consider the temporal evolution of $\vec{w}^T\vec{p}(t)\equiv f(t)$, the projection of $\vec{p}$  onto  $\vec{w}$. In the remainder of this proof, we drop the arrow symbols from vectors to reduce notation clutter.
	Since $w$ is a left eigenvector of $\hat{\mathcal{M}}$ corresponding to the eigenvalue $\lambda$, we can write
	\begin{equation}
		w^T\hat{\mathcal{M}} = \lambda_\mathrm{max} w^T \quad \text{or, equivalently,} \quad
		w^T\hat{K}\hat{\mathcal{C}} = \lambda_\mathrm{max}  w^T\hat{\mathcal{E}}.
		\label{eq:eigenvectors}
	\end{equation}
	We can make use of this last relation to obtain the following equation for the time derivative of $f(t)$
	\begin{equation}\label{eq:fdot}
		\dot f= -w^T\hat{\mathcal{E}} p + w^T \hat{K}\hat{\mathcal{C}} p = w^T\hat{\mathcal{E}} p(-1 + \lambda_\mathrm{max}),
	\end{equation}
	We now define $e_\mathrm{min}>0$ as the minimum local extinction rate, \ie $e_\mathrm{min}\equiv \min\{e_i: i=1,\dots , N\} $, and derive the following inequality:
	\begin{equation}
		w^T\hat{\mathcal{E}} p=\sum_i w_i\, p_i\, e_i \geq e_\mathrm{min}\sum_i w_i\, p_i= e_\mathrm{min} f.
	\end{equation}
	If $\lambda_\mathrm{max}>1$, multiplying the last inequality by $\lambda_\mathrm{max}-1$ and combining with \cref{eq:fdot} leads to:
	\begin{equation}
	    \dot f(t)> e_\mathrm{min}\,(\lambda_\mathrm{max}-1) f(t),
	\end{equation}
	which implies
	\begin{equation}
		f(t) \geq f(0) e^{e_\mathrm{min}\,(\lambda_\mathrm{max}-1)t}.  
	\end{equation}
	Thus $f(t)$ grows exponentially, which means at least one component $p_i(t)$ must increase, showing that the extinction state, $\vec p=0$, is unstable. Consider now the case $\lambda_\mathrm{max}< 1$. From \cref{eq:model-one-spec-lean} one derives the following exact inequality , 
    \begin{equation}
        \dot p_i(t)\leq -e_i p_i + \sum_{j=1}^N K_{ij}c_jp_j.
    \end{equation}
    Using the same steps as above we get
    \begin{equation}
        \dot f \leq w^T\hat {\mathcal{E}} p(\lambda_\mathrm{max}-1)\leq e_\mathrm{min}(\lambda_\mathrm{max}-1) f,
    \end{equation}
    which implies 
    \begin{equation}
        f(t) \leq f(0) e^{e_\mathrm{min}\,(\lambda_\mathrm{max}-1)t}.
    \end{equation}
    The exponential decay of $f(t) = \sum_i w_i p_i(t)$ -- a sum of non negative terms -- implies that $p_i(t) \to 0$ for all $i$ as $t \to \infty$ since $w_i>0\,\, \forall i$. Hence, the extinction state, $\vec p=0$, is globally stable.
\end{proof}

\begin{proof}[Proof of \cref{th:firstcorollary}]
	Let $\vec{p^*} \neq 0$ indicate a non-trivial stable stationary state of \cref{eq:model-one-spec-lean}. Suppose $p_k^*=0$ for an index $k$. Then, \cref{eq:model-one-spec-lean} implies that
	\begin{equation}
		\sum_{j=1}^{N}K_{kj} c_j p_j^*=0.
		\label{eq:absurdum}
	\end{equation}
	This implies that $p^*_j=0$ for all $j$ such that $K_{kj} c_j>0$. Because $\hat{K}$ is irreducible, a positive integer $m(i)$ exists such that $(\hat{K}^m)_{ki}>0$ for each $i$. Consequently, for each $i$ there is at least one sequence of $n$ elements of $\hat{K}$ such that $K_{k k_{n-1}}K_{k_{n-1} k_{n-2}}\dots K_{k_1 i}>0$, $1\leq n\leq m$, with $k,\, k_{n-1},\,\dots ,\, k_1,\,i$ being all distinct. Therefore, $p_{k_{n-1}}^*=0$. We can now state \cref{eq:absurdum} with $p_{k_{n-1}}^*$ in place of $p_{k}^*$, which implies $p_{k_{n-2}}^*=0$. We can do this recursively, obtaining $p_i^*=0$. Since $i$ is arbitrary we find that $\vec {p^*}=0$, which contradicts the hypothesis that $\vec {p^*}$ is non-trivial.	
\end{proof}

\begin{proof}[Proof of \cref{th:HO-assumptions}]
	We begin by noting that, if $c_i = c A_i$, $e_i = e / A_i$, and $K_{ij} = e^{-\alpha d_{ij}}$ the generalized landscape matrix can be written as
    \begin{equation}
        \mathcal{M}_{ij} = \frac{c}{e} \mathcal{M}_{ij}^H, \quad \mathcal{M}_{ij}^H  \equiv A_ie^{-\alpha d_{ij}}A_j
    \end{equation}
    where $\mathcal{M}_{ij}^H$ is symmetric and is equivalent to the landscape matrix introduced by Hanski and Ovaskainen \cite{HanOva2000}. Thus, since $\delta = e/c$, the ``leading eigenvalue" or ``metapopulation capacity" of $\hat{\mathcal{M}}^H$ is $\lambda_M=\delta \cdot \lambda_\mathrm{max}$. Hence, the condition $\lambda_\mathrm{max}>1$ is equivalent to $\lambda_M>\delta$.
\end{proof}

\begin{proof}[Proof of \cref{th:theorem-TI-singlespecies}]
The matrix $\hat{K}\hat{\mathcal{E}}^{-1}$ is irreducible.  Let $\tilde{K}_0$ be the row sum of $\hat{K}$, which is row-independent. Then, $\vec{v}=(1,\dots 1)^T$ is a right eigenvector of $\hat{K}$  corresponding to the eigenvalue $\tilde{K}_0$. For non-negative matrices, the spectral radius is bounded by the minmum and maximum row sums from above and below, respectively. Hence, we conclude that $\vec{v}$ is the Perron-Frobenius eigenvector of both $\hat{K}$ and $\hat{K}\hat{\mathcal{E}}^{-1}$, since $\hat{\mathcal{E}}$ is a scalar matrix. Hence, from \cref{th:theorem1} we conclude that: if $\tilde{K}_0/e<1$ the absorbing state $p^*_x=0$ is globally stable; if $\tilde{K}_0/e>1$ the absorbing state is unstable, and $p_x^*=1-e/\tilde{ K}_0$ is a stationary state. 

To show that $p_x^*=1-e/\tilde{ K}_0$ is the only stationary state, assume that $s_x>0$ is a non-trivial steady state of the dynamics \cref{eq:model-one-spec-const-row-sum}. Then, we can write
\begin{equation}
\frac{1}{s_x} = 1 + \frac{e}{\sum K_{xy}s_y}.
\label{eq:proof-th-uniform}
\end{equation}
Define now $P=\min{s_x}$ and $Q=\max{s_x}$. Clearly, $0 < P \leq Q$, and
\begin{equation}
1 + \frac{e}{\tilde{K}_0 Q} \leq \frac{1}{s_x}  \leq 1 + \frac{e}{\tilde{K}_0 P}.
	\label{eq:proof-th-uniform-1}
\end{equation}
Since the first inequality must hold for all $x$, we can evaluate it for the index for which $s_{x_Q}=Q$, obtaining
\begin{equation}
	1 + \frac{e}{\tilde{K}_0 Q} \leq \frac{1}{s_{x_Q}} = \frac{1}{Q}.
	\label{eq:proof-th-uniform-1b}
\end{equation}
Doing the same for the second inequality in \cref{eq:proof-th-uniform-1} and combining with $1/Q \leq 1/P$ yields
\begin{equation}
1 + \frac{e}{\tilde{K}_0 Q} \leq \frac{1}{Q} \leq \frac{1}{P} \leq 1 + \frac{e}{\tilde{K}_0 P}.
	\label{eq:proof-th-uniform-2}
\end{equation}
The first and last inequalities imply
\begin{equation}
	Q \leq 1 - \frac{e}{\tilde{K}_0} \quad \textrm{and} \quad 1 - \frac{e}{\tilde{K}_0} \leq P,
	\label{eq:proof-th-uniform-3}
\end{equation}
respectively. The two last equalities are only consistent with $P\leq Q$ if $P=Q=s_x=1 - \frac{e}{\tilde{K}_0}$. This shows that  $p_x^*=1-e/\tilde{ K}_0$ is the only stationary solution. We now prove its linear stability by studying explicitly the eigenvalues of the Jacobian matrix  evaluated at $p_x^*=p^*=1-e/\tilde{ K}_0$
\begin{align}
J_{xy} = \frac{\partial \dot{p}_x}{ \partial  p_y}\Bigg|_{p^*} &= (-e + \tilde{ K}_0 p^*)\delta_{xy} + (1 - p^*) K_{xy}\\
&= \tilde{K}_0\delta_{xy} + \frac{e}{\tilde{K}_0} K_{xy}.
\end{align}
If $\lambda_\alpha$ indicate the eigenvalues of $\hat{K}$, the eigenvalues of the Jacobian matrix are $\frac{e}{\tilde{K}_0} \lambda_\alpha - \tilde{K}_0$. We have shown above that $\max_\alpha \Re{\lambda_\alpha} = \tilde{K}_0$. Hence, the eigenvalue with the maximum real part is $e - \tilde{K}_0<0$, which implies the linear stability of the equilibrium.
\end{proof}

\begin{proof}[Alternative proof of \cref{th:theorem-TI-singlespecies} based on Fourier modes]
	We consider the uniform stationary state $p_i=p^*$ for all $i$, in which all patches have the same population. We consider the global extinction solution $p^*=0$ first, which is always a fixed point of the system \cref{eq:model-one-spec-const-row-sum}. We then considered the system's linearization around it:
	\begin{equation}
		\dot p_i = -e p_i + \sum_{j \in \Lambda}  K_{ij}p_j.
		\label{eq:one-spec-Tinv-pzero}
	\end{equation}
	{Because we dealing with a linear, translation invariant system, plain waves $\mathbf v_k=\left(e^{i 2\pi k \cdot j/ N }\right)_{j=0,1,\dots N-1}$ with $ k\in \{0, \dots, N-1\}^d$ represent a full set of eigenfunctions, or modes. Thus $\mathbf{p}(t)= \sum_{k=0}^{N-1}a_k(t)\mathbf{v}_k$ where the mode's amplitude response can be computed using \cref{eq:one-spec-Tinv-pzero}, leading to}
	\begin{equation}
		\dot{ a}_k = \big(-e  + {\tilde{ K}}_{k}\big)  a_k,
		\label{eq:one-spec-Tinv-zero-eigenf-eq}
	\end{equation}
	where $\tilde{ K}_k=\sum_j e^{-i 2\pi k j /N} K_{ij}$ indicates the discrete Fourier transform of the kernel $\hat{K}$. \Cref{eq:one-spec-Tinv-zero-eigenf-eq} is solved by an exponential function, which decays to zero when $-e  + \Re{{\tilde{ K}}_{k}} <0$. Requiring $\max\limits_{k}{\Re{\tilde{ K}_{k}}} <e$ guarantees then that the amplitude of all modes decays, and that the extinction equilibrium is thus stable. Since  $\Re{{\tilde{  K}}_{k}}=\sum_j K_{ij} \cos(2\pi k j /N) \leq \sum_j K_{ij} =\tilde{ K}_0$, then $\max\limits_{k}{\Re{{\tilde{ K}}_{k}}} =\tilde{K}_0$, we conclude that $\tilde K _0 < e$ implies the local stability of $p^*=0$, which proves the first part of the theorem. Naturally, this condition is consistent with \cref{th:theorem1}, because $\tilde{ K}_0 = \lambda_M$ is the largest eigenvalue of the dispersal kernel. 
	
	To prove the second part of the theorem, we begin with determining the nonzero uniform fixed point from \cref{eq:one-spec-Tinv-pzero}:
	\begin{equation}
		p^*=\frac{\tilde{K}_0-e}{\tilde{K}_0}.
		\label{eq:onesp-transinv-pstar}
	\end{equation}
	First, we notice that  \cref{eq:onesp-transinv-pstar} is unfeasible when $e>\tilde{K}_0$, i.e., when the trivial solution is stable. Conversely, the nontrivial solution is feasible when $e<\tilde{K}_0$. To study its stability, we compute the Jacobian matrix and evaluate it at $p_i=p^*$:
	\begin{equation}
		J_{\ell j}=\der{\dot{p}_\ell }{p_j}\Bigg|_{p^*} = \delta_{\ell j}\Big(-e-\tilde{K}_0 p^*\Big)+(1-p^*)K_{\ell j},
	\end{equation}
	which has the following eigenvalues
	\begin{equation}
		(-e-\tilde{K}_0 p^*) + (1-p^*) \tilde{K}_k.
	\end{equation}
	For the fixed point to be stable, all eigenvalues must have a negative real part. In other words, the largest real part must be negative, \ie 
	\begin{equation}
		0> \max_k\Big\{(-e-\tilde{K}_0 p^*) + (1-p^*) \tilde{K}_k\Big\} =-e-\tilde{K}_0p^* + (1-p^*) \tilde{K}_0 = e-\tilde{K}_0,
	\end{equation}
	which proves the second part of the theorem.
\end{proof}

\section{Proofs of theorems of \cref{sec:multispec-results}}\label{app:proofs-two}

\begin{proof}[Proof of \cref{th:theorem-TI-multispec}]
	We begin with noting that if $\hat{K}_{\alpha}$ is translation-invariant, then so is  $c_{\alpha} \hat{K}_{\alpha}$. Hence, we can consider the case $c_{\alpha}=1$, which amounts to a redefinition of $K_{\alpha,ij}$ that does not change the theorem's hypotheses. 
	We now consider the uniform stationary solution, \ie $p^*_{\alpha, i} = p^*_{\alpha}$ for all $i$, and introduce $Q \equiv 1-\sum_{\beta=1}^{S}p^*_{\beta}$, so that a uniform stationary solution to 
	\cref{eq:main-eq-multispec-TI-case} must satisfy
	\begin{equation}
		e_{\alpha} p^*_{\alpha} = Q \sum_{j=1}^{N} K_{\alpha, ij} p^*_{\alpha}.
		\label{eq:stationary-tras-inv}
	\end{equation}
	If species $\alpha$ survives, then $p^*_{\alpha}>0$, which implies
	\begin{equation}
		\frac{e_{\alpha}}{z_{\alpha}} = Q,
		\label{eq:coex-cond-CM}
	\end{equation}
	where we introduced the shorthand for the zero-frequency Fourier mode of $\hat{K}_\alpha$
	\begin{equation}
		z_{\alpha} \equiv  \sum_{j} K_{\alpha, ij}=\tilde{K}_{\alpha, 0},
		\label{eq:def-zalpha}
	\end{equation}
	and $Q \in (0,1)$. \Cref{eq:coex-cond-CM} must hold for any species that survives. Hence, given the set $E$, one of the $2^S$ possible subsets of the $S$ species, a spatially uniform solution in which all species within $E$ coexist is defined by:
	\begin{equation}
		\mathcal{C}=\{ \mathbf p^*: p_{\alpha i}\equiv p_\alpha^*\,  \textrm{if}\, \alpha \in E, \, p_{\alpha i}=0\, \textrm{if}\,\alpha \notin E\,\textrm{and}\, \sum_\alpha p_{\alpha}^* +Q-1=0\}
		\label{eq:centralmanifold}
	\end{equation}
	We now show that $\mathcal{C}$ is a center manifold. To this end, we first compute the system's Jacobian, $\hat{J}$, from  \cref{eq:main-eq-multispec-TI-case}, and evaluate it at a point belonging to $\mathcal{C}$:
	\begin{align}
		\der{\dot{p}_{\alpha i}}{p_{\beta k}}\Bigg |_{\mathbf p^* \in \mathcal{C}}=J_{(\alpha,i), (\beta,k)} &= -\delta_{ik} \delta_{\alpha \beta} e_{\alpha} + \Big(1 - \sum_{\nu} p^*_{\nu}\Big) K_{\alpha, ik} \delta_{\alpha \beta} - \delta_{ik} \sum_{\ell} K_{\alpha, i\ell} p^*_{\alpha} \label{eq:Jacobian}\\
		&= -\delta_{ik} \delta_{\alpha \beta} e_{\alpha} + Q K_{\alpha, ik} \delta_{\alpha \beta} - \delta_{ik}  p^*_{\alpha} z_{\alpha}.
	\end{align}	
	To prove that $\mathcal{C}$ is a center manifold, we consider the following set:
	\begin{equation}
		\mathcal{C}_0=\{ \mathbf  v: v_{\alpha i}\, \textrm{independent of}\, i\, \textrm{and}\, \sum_\alpha v_{\alpha i} =0\}.
	\end{equation}
	According to this definition, a vector $\mathbf v\in\mathcal{C}_0$ is uniform with respect to the patch index, and the sum over species index is null, \ie, $\sum_\alpha v_{\alpha i}= 0$.	
	Applying the Jacobian to $\mathbf{v} \in \mathcal{C}_0$ demonstrates directly that $\mathbf{v}$	is an eigenvector of $\hat{J}$ corresponding to the zero eigenvalue (we write $v_{\beta k}=v_\beta$ for simplicity, since the components of $\mathbf v$ cannot depend on the patch index):
	\begin{align}
		\sum_{\beta, k} J_{(\alpha,i), (\beta,k)} v_{\beta, k} &=
		-\sum_{\beta}\sum_{k} \delta_{\alpha\beta}\delta_{i k}e_{\alpha}v_\beta 
		+ Q \sum_{\beta}\sum_{k} K_{\alpha,i k}\delta_{\alpha\beta}v_\beta
		+ \sum_{\beta, k} p_{\alpha}^* z_{\alpha}v_\beta\notag \\
		&= v_\alpha(-e_\alpha +z_\alpha Q) -p_\alpha^* z_\alpha \sum_\beta v_\beta=0.\notag
		\label{eq:Jacobian-TI-ms-case}
	\end{align}
	We conclude that the Jacobian has a nontrivial null space, which implies that $\mathcal{C}$ is a center manifold.
	
	We now prove the stability of $\mathcal{C}$. To this end, we first show that there are no positive eigenvalues. To begin with, we note that the Jacobian has a block structure, in which each of the $S$ diagonal blocks, indexed by $\alpha$, is $ Q\hat{K}_\alpha - e_\alpha I_N + \hat p_\alpha$, where $\hat{p}_\alpha = -p^*_\alpha z_\alpha$. Each off-diagonal block is $I_N \hat{p}_\alpha$, where $\alpha$ is the row index, and off-diagonal blocks do not depend on the block column index.
	We define $U\in\mathbb{C}^{N\times N}$ as the basis change matrix that diagonalizes every block (a spatial plain wave basis), \ie each column of $U$ is $e^{i q \ell}$ and $q$ is the wave number.
	We can write for the entire block matrix
	\begin{equation}
		P =	\operatorname{diag}(U,\dots ,U) = I_{S}\otimes U \qquad \bigl(P^{-1}=P^{\dagger}\bigr).
	\end{equation}
	We apply this similarity transformation to the Jacobian matrix,
	\begin{equation}
		\tilde J \;:=\;P^{-1}JP
		\;=\;\bigl[\tilde J(q)\bigr]_{q=0}^{N-1},
	\end{equation}
	which splits the problem into the Fourier modes $\;\tilde J(q)\in\mathbb{C}^{S\times S}$ (the basis change has no effect on the off-diagonal blocks because they are scalar matrices).
	Explicitly, we find a Jacobian for each mode, which is
	\begin{equation}
		\tilde J_{\alpha\beta}(q)
		=\bigl(Q \lambda_{K_\alpha}(q)-e_{\alpha}+\hat p_{\alpha}\bigr)\,\delta_{\alpha\beta}
		\;+\;
		\hat p_{\alpha}\,(1-\delta_{\alpha\beta})
		\label{eq:Jacobian-Fourier}
	\end{equation}
	with block (species) indices $\alpha,\beta=1,\dots ,S$ and $\lambda_{K_\alpha}(q)$ is the $q$ eigenvalue from each $\hat{K}_\alpha$.
	We now set
	\begin{equation}
		d_{\alpha}(q):=Q \lambda_{K_\alpha}(q)-e_{\alpha},
		\qquad
		u_{\alpha}:=\hat p_{\alpha},
		\qquad
		\mathbf 1:=(1,\dots ,1)^{\!\top},
		\label{eq:def-four-jac-decomposition}
	\end{equation}
	so that we split the matrix into two parts:
	\begin{equation}
		\tilde J(q)=D(q)+u\,\mathbf 1^{\!\top},
		\label{eq:Jacobian-fourier-separation}
	\end{equation}
	where $D_{\alpha\beta}(q)=d_{\alpha}(q)\,\delta_{\alpha\beta}$ is diagonal and the other term has a rank one.
	We now look for the eigenvalues of $\tilde{J}(q)$, which are the zeros of the equation 
	\begin{equation}
		\det\bigl(\tilde J(q)-\mu I_{S}\bigr)
		=\det\bigl(D(q)-\mu I_{S}\bigr)
		\Bigl[\,1+\mathbf 1^{\!\top}(D(q)-\mu I_{S})^{-1}u\,\Bigr],
		\label{eq:eigenvalues-J-q}
	\end{equation}
	where we used the matrix determinant lemma.
	Since $D(q)$ is diagonal, the term in square brackets is easy to compute, which yields
	\begin{equation}
		\mathbf 1^{\!\top}(D(q)-\mu I_{S})^{-1}u
		=\sum_{\alpha=1}^{S}\frac{\hat p_{\alpha}}{d_{\alpha}(q)-\mu}.
	\end{equation}
	Hence, the characteristic equation to find the eigenvalues becomes
	\begin{equation}
		\det\bigl(\tilde J(q)-\mu I_{S}\bigr)
		=\prod_{\alpha=1}^{S}\bigl[d_{\alpha}(q)-\mu\bigr]\;
		f_{q}(\mu) = 0
		\label{eq:char-eq-final}
	\end{equation}
	where
	\begin{equation}
		f_{q}(\mu)
		:=1+\sum_{\alpha=1}^{S}
		\frac{\hat p_{\alpha}}{d_{\alpha}(q)-\mu}.
	\end{equation}
	Recalling that $\lambda_{K_\alpha}(0) = z_\alpha$, we now find for $q=0$ that
	\begin{equation}
		d_\alpha(0) = Q \lambda_{K_{\alpha}} (0) - e_\alpha = e_\alpha \left(\frac{\lambda_{K_{\alpha}}(0)}{z_\alpha} - 1 \right) = 0.
	\end{equation}
	Hence, the characteristic equation has a zero root with multiplicity $S-1$, and the only non-zero  root of the characteristic equation is 
	\begin{equation}
		f_{0}(\mu)=1+\sum_{\alpha=1}^{S} \frac{\hat p_{\alpha}}{-\mu} = 0 \rightarrow \bar{\mu}_0 = \sum_\alpha \hat{p}_\alpha = -\sum_\alpha z_\alpha p_\alpha^* < 0.
	\end{equation}
	Hence, the dominant eigenvalue for $q=0$ is zero. Recalling the definition of $d_\alpha(q)$, \cref{eq:def-four-jac-decomposition}, we now consider $q \neq 0$, and observe that
	\begin{equation}
		d_\alpha(q) < d_\alpha(0) = 0
		\label{eq:nonuniform-diagonal-are-neg}
	\end{equation}
	where the  inequality holds because $\hat{K}_\alpha$ is positive and translation invariant, a fact used also in \cref{th:theorem-TI-singlespecies}. Assuming now that species are different, then $d_\alpha(q)$ are distinct when $q \neq 0$. Direct substitution shows that none of $d_\alpha(q)$ are zeros of the characteristic equation. Hence, all non-uniform eigenvalues are roots of $f_q(\mu)$.  We now take the real part of $f_q(\mu)$
	\begin{equation}
		\Re{f_q(\mu)} = 1 + \sum_{\alpha=1}^{S} \hat p_{\alpha} \frac{\Re{d_\alpha(q)} - \Re{\mu}}{|d_{\alpha}(q)-\mu|^2},
	\end{equation}
	and assume that $\Re{\mu} \geq 0$. Then, using \cref{eq:nonuniform-diagonal-are-neg} and recalling that $\hat{p}_\alpha<0$ for all $\alpha$, we conclude that all terms in the sum are positive. Hence, the real part of $f_q(\mu)$ is always strictly larger than one, and cannot have any root in the $\Re{f_q(\mu)} \geq 0$ half plane when $q\neq 0$. We conclude that all eigenvalues for $q \neq 0$ have strictly negative real part. Hence, the maximum eigenvalue is zero and has with multiplicity $S-1$, which defines the center manifold introduced above.
	
	Finally, we compute the system's response to a small perturbation from one point belonging to the center manifold $\mathcal{C}$, keeping all nonlinear terms.
	We insert
	\begin{equation}
		p_{\alpha i} = p_\alpha^* + \Delta p_{\alpha i}
	\end{equation}
	into the equations of motion, obtaining
	\begin{align}
		\dot p_{\alpha i} = \Delta \dot  p_{\alpha i} &= -e_\alpha \left( p_\alpha^* + \Delta p_{\alpha i} \right) +
		\left[ 1 - \sum_{\beta} \left( p_{\beta}^{*} + \Delta p_{\beta i} \right) \right]
		\sum_{j} K_{\alpha, i j} \left( p_{\alpha}^{*} + \Delta p_{\alpha j} \right)	\\
		&= - e_{\alpha} p_{\alpha}^{*} - e_{\alpha} \Delta p_{\alpha i} 
		+ \underbrace{\left( 1 - \sum_{\beta} p_{\beta}^{*} \right)}_{\text{Q}} 
		\sum_{j} K_{\alpha i j} \left( p_{\alpha }^{*} + \Delta p_{\alpha j} \right) \\
		& - \sum_{\beta} \Delta p_{\beta i} \sum_{j} K_{\alpha, i j} 
		\left( p_{\alpha}^{*} + \Delta p_{\alpha j} \right) \\ &= 
		- e_{\alpha} p_{\alpha}^{*} - e_{\alpha} \Delta p_{\alpha i} 
		+ Q \underbrace{\sum_{j} K_{\alpha, i j} }_{z_{\alpha}} p_{\alpha}^{*}
		+ Q \sum_{j} K_{\alpha, i j} \Delta p_{\alpha j} \\
		& - \sum_{\beta} \Delta p_{\beta i} \sum_{j} K_{\alpha, i j} p_{\alpha}^{*} 
		+ \sum_{\beta} \Delta p_{\beta i} \sum_{j} K_{\alpha, i j} \Delta p_{\alpha j}	\\ 
		&= 	\sum_{\beta,j} J_{(\alpha,i),(\beta, j)} \Delta p_{\beta j} - \sum_{j} K_{\alpha, ij} \Delta p_{\alpha j} \sum_{\beta}\Delta p_{\beta i}.
	\end{align}
	Consistently, the constant term vanishes, the linear term coincides with the Jacobian, and the only nonlinear part is quadratic. 
	
	It is easy to see that the eigenvectors corresponding to the zero eigenvalue span the $S-1$ dimensional space defined by the $\mathcal{C}_0$ set defined above. For any $v \in \mathcal{C}_0$, the linear and quadratic responses are zero, and since there are no higher order terms, we conclude that any perturbation along this direction does not feel any response field, and does not leave $\mathcal{C}$. For perturbations along any other direction, the linear term dominates, and acts hence as a restoring force. 
\end{proof}

\begin{proof}[Proof of \cref{th:monodominance}]
	When $p_{\alpha}^* = p_1^* \delta_{\alpha 1}$, the Jacobian \cref{eq:Jacobian} is upper block triangular, so its eigenvalues are those of the diagonal blocks. 	
	The block corresponding to the surviving species $\alpha=1$ is $(-e_1 - z_1 p_1^*) \mathbb{I}_N + Q \hat{K}_1$, with $Q = e_1 / z_1$, $p_1^* = 1 - e_1 / z_1$, and $\hat{K}_1 = (K_{1,ij})$. Since $\hat{K}_\beta$ is positive for all $\beta$ (irreducible would suffice), its largest eigenvalue $\lambda_{M,1}$ is bounded between the minimum and maximum row sums. Translational invariance makes the row sum constant, $z_1$, so $\lambda_{M,1} = z_1$. The largest eigenvalue of this block is $(-e_1 - z_1(1 - e_1/z_1)) + (e_1/z_1) \cdot z_1 = e_1 - z_1 < 0$.
	For each extinct species $\alpha \neq 1$, \cref{eq:Jacobian} with $\alpha=\beta$ gives the block $-e_\alpha \mathbb{I}_N + Q \hat{K}_\alpha$, where again $Q = e_1 / z_1$. Its largest eigenvalue is $-e_\alpha + (e_1/z_1) z_\alpha < -e_\alpha + (e_\alpha/z_\alpha) z_\alpha = 0$, where the inequality follows from the hypothesis $e_1 / z_1 < e_\alpha / z_\alpha$ for all $\alpha \neq 1$.
	Since all blocks have strictly negative leading eigenvalues, the state $p_{\alpha}^* = p_1^* \delta_{\alpha 1}$ is locally stable.
\end{proof}

\section{Derivation of stochastic dynamics}\label{app:stochmetapop}
\noindent
We refer to Ref.~\cite{DoiNic2025} for a detailed derivation of the reduced SDE. Briefly, we start from the two-dimensional Fokker–Planck equation \eqref{eq:FP} for the joint probability density $\mathcal{P}(\vec{\rho},\vec{x},t)$, we derive the one-dimensional stochastic dynamics for the effective settled population density $\rho$. 
To reduce dimensionality, we begin by assuming a fully connected dispersal network and homogeneous habitat patches ($c_i = c$, $e_i = e$ $\forall\, i$),  rendering the system translationally invariant. Setting $\mathcal{A} = 1/N$ ensures a well-defined large-$N$ limit. 
By integrating out all sites except one, we define the marginalized probability distribution $P(\rho,x,t)$. 
Under mean-field–like approximations consistent with the fully connected topology, each variable is replaced by its effective value. 
In the limit $N \to \infty$, this procedure yields a two-dimensional Fokker–Planck equation for $P(\rho,x,t)$, governing the joint stochastic dynamics of the  settled and explorer effective population densities $\rho,\, x$.

By subsequently invoking a separation of timescales between the dynamics of explorers and settled populations, which we apply to the corresponding Langevin formulation, we obtain the following one-dimensional effective  It\^o SDE:
\begin{align}
	\dot{\rho} \!=\! \mathcal{\tilde{A}}(\rho)\! +\! \sqrt{\frac{1}{M} \tilde{\mathcal{D}}(\rho)} \sigma(t), 
	\quad \mathcal{\tilde{A}}(\rho)\! =\! \left[c\, h(f)\, (1\!-\!\rho)\! -\! e\right] \rho, 
	\quad \tilde{\mathcal{D}}(\rho)\! =\! e\! +\! \tfrac{c h(f)}{2(1+D/\lambda)} 
	\!\left[
	1\!-\!\rho^2  \!+\! 2 \tfrac{D}{\lambda} (1\!-\!\rho)
	\right]
\end{align}
The drift $\mathcal{\tilde{A}}(\rho)$ coincides with the deterministic mean-field dynamics \eqref{eqn:rate_equations_singlespecies}, while the diffusion term $\mathcal{D}(\rho)$ arises from the combination of demographic noise in both the explorer and settled compartments, properly renormalized by the fast-variable elimination.  

To derive the scaling relations presented in \cref{sec:stochastic}, we start from
the backward Fokker-Planck equation corresponding to \cref{eq:stochSDE} for the n-th moment of extinction time $T_n = \langle T^n \rangle$ reads:
\begin{align}\label{eq:nthmoment}
    &\mathcal{\tilde{A}}(\rho)\,\partial_{\rho} T_n(\rho)+\frac{1}{2 M}\tilde{\mathcal{D}}(\rho)\,\partial_{\rho}^2 T_n(\rho) = -n \, T_{n-1}(\rho)
\end{align}
and that for the survival probability:
\begin{align}\label{eq:survprob}
  \partial_t S &= \mathcal{\tilde{A}}(\rho)\,\partial_{\rho} S+\frac{1}{2 M}\tilde{\mathcal{D}}(\rho)\,\partial_{\rho}^2 S \; .
\end{align} 
Introducing the parameter
\begin{align}\label{eq:delta}
\Delta = \frac{1}{\lambda_M}\left( \lambda_M - \frac{e}{c}\right) = \frac{\epsilon}{\lambda_M} \, ,
\end{align}
to quantify the distance from the transition point, $\mathcal{\tilde{A}}$ and $\tilde{\mathcal{D}}$ can be rewritten as: 
\begin{align}
    & \mathcal{\tilde{A}}(\rho, M, \Delta) = \lambda_M \,c \left( \Delta - \rho \right) \rho \nonumber \\
    & \tilde{\mathcal{D}}(\rho, M, \Delta) = 
     \frac{\lambda_M \, c}{2}
    \,
    \left[ 
    3 + \frac{D}{D+\lambda}
    - 2 \Delta 
    - \frac{2D}{D+\lambda}  \,\rho
    - \frac{\lambda}{D+\lambda}  \rho^2 
    \right] \, \rho
\end{align}
and we assume to be close to the transition point, so $\epsilon \approx 0$ in \cref{eq:delta}.

By properly rescaling the effective population density $\rho \rightarrow \hat{\rho} = \rho\, \sqrt{M}$, the distance from the critical point $\Delta \rightarrow \hat{\Delta} = \Delta \sqrt{M}$ and the extinction-time moments in equation \eqref{eq:nthmoment} (or equivalently rescaling the survival probability in equation \eqref{eq:survprob}), we obtain the scaling exponents of \ref{tab:scaling}, which define the following scaling forms:
\begin{align}
    \begin{split}
        &  S(t|M) = t^{-\alpha} \hat{S}( t M^{\phi})
        \\
        & T_1(\rho,\Delta, M)= M^{\beta} \hat{T}_1 (\rho M^{\gamma}, \Delta M^{\eta})
        \\
        & T_2(\rho,\Delta, M)= M^{\delta} \hat{T}_2 (\rho M^{\gamma}, \Delta M^{\eta})
    \end{split} \; .
\end{align}
The scaling analysis above characterizes the critical behavior and extinction dynamics close to the transition point. 
Beyond this regime, in the survival phase ($\epsilon>0$), the system reaches a metastable state where demographic fluctuations persist around a finite mean occupancy. 
To quantify these fluctuations and connect them with the underlying nonlinear stochastic dynamics, we next consider a linearized approximation around the deterministic fixed point. 
This approach leads to an Ornstein–Uhlenbeck (OU) description of the quasi-stationary dynamics, 
\begin{equation}
	\dot \rho_{\mathrm{OU}} = -\kappa\, \rho_{\mathrm{OU}} + \sqrt{2\sigma_{\mathrm{eff}}} \, \eta(t),
	\qquad \text{with} \qquad \sigma_{\mathrm{eff}} = \sqrt{ \frac{D_{\mathrm{eff}}}{2M\kappa} },
\end{equation}
where \(\eta(t)\) is Gaussian white noise, and the effective drift \(\kappa\) and diffusion coefficient \(D_{\mathrm{eff}}\) are 
\begin{equation}
	\kappa = \left| \frac{\partial \mathcal{\tilde{A}}}{\partial \rho} \big\rvert_{\rho_*} \right| = |e - c\,h(f)|\,,
	\qquad
	D_{\mathrm{eff}} = \tilde{\mathcal{D}}(\rho_*) = \frac{1}{2} \frac{e\, [e - c\,h(f)] [e\, \lambda - 4 \,c\,h(f)(D + \lambda)]}{[c\,h(f)]^2 (D + \lambda)}.
\end{equation}

\begin{table}[t]
	\begin{tabular*}{0.3\linewidth}{@{\extracolsep{\fill}}cccccc}
		\hline\hline
		$\alpha$ 
		&$\phi$ 
		&$\gamma$ 
		&$\eta$ 
		&$\beta$
		&$\delta$
		\rule[-2ex]{0pt}{5ex} \\
		\hline
		0 & $-\frac{1}{2}$ &$\frac{1}{2}$ &$\frac{1}{2}$
		&$\frac{1}{2}$ &1
		\rule[-2ex]{0pt}{5ex}
		\rule[-2ex]{0pt}{5ex} 
		\\
		\hline\hline
	\end{tabular*}
	\caption{
		\raggedright
		Scaling exponents characterizing the survival probability $S$ and the first and second moments of the exit time, $\langle T \rangle$ and $\langle T^2 \rangle$ in the vicinity of the critical point.
	}
	\label{tab:scaling}
\end{table}

\end{document}